\definecolor{orange}{rgb}{1,0.5,0}
\newtheorem{theorem}{Theorem}[section]
\newtheorem{corollary}{Corollary}[theorem]
\newcommand{\bes} {\begin{subequations}}
\newcommand{\ees} {\end{subequations}}
\newcommand{\bea} {\begin{eqnarray}}
\newcommand{\eea} {\end{eqnarray}}
\definecolor{gold}{rgb}{0.85,.66,0}
\newcommand{\beq}{\begin{equation}}
\newcommand{\eeq}{\end{equation}}
\newcommand{\ignore}[1]{}
\def\g{\gamma}
\def\s{\sigma}
\def\>{\rangle}
\def\<{\langle}
\def\s0{I}
\newcommand{\ig}[1]{}
\definecolor{jmlBlue}{RGB}{0,163,224}
\begin{document}
\title{Noiseless Loss Suppression for Entanglement Distribution}
\author{Cory M. Nunn}
\affiliation{National Institute of Standards and Technology, Gaithersburg, Maryland 20899, USA}
\affiliation{University of Maryland Baltimore County, Baltimore, MD 21250, USA}
\author{Daniel E. Jones}
\affiliation{DEVCOM Army Research Laboratory, Adelphi, MD 20783, USA}
\affiliation{National Institute of Standards and Technology, Gaithersburg, Maryland 20899, USA}
\author{Todd B. Pittman}
\affiliation{University of Maryland Baltimore County, Baltimore, MD 21250, USA}
\author{Brian T. Kirby}
\affiliation{DEVCOM Army Research Laboratory, Adelphi, MD 20783, USA}
\affiliation{Tulane University, New Orleans, LA 70118, USA}
\date{\today}
\begin{abstract}
Recent work by Mi\v{c}uda et al.~\cite{mivcuda2012noiseless} suggests that pairing noiseless amplification with noiseless attenuation can conditionally suppress loss terms in the direct transmission of quantum states. 
Here we extend this work to entangled states: first, we explore bipartite states, specifically the two-mode squeezed vacuum (TMSV) and NOON states; and second, we examine $M$-partite states, concentrating on W and Greenberger–Horne–Zeilinger (GHZ) states.
In analogy with the original proposal, our results demonstrate that in each case under consideration, a correct combination of attenuation and amplification techniques before and after transmission through a pure loss channel can restore the initial quantum state.
However, we find that for both W and NOON states, the noiseless attenuation is redundant and not required to achieve loss term suppression. This work clarifies the role of noiseless attenuation when paired with noiseless amplification for entanglement distribution and provides an operational example of how GHZ and W state entanglement differs. 
\end{abstract}
\maketitle

\section{Introduction}

Quantum networking offers the potential for various advanced applications such as distributed quantum computing \cite{cirac1999distributed}, long-baseline interferometry \cite{gottesman2012longer}, and precise time synchronization \cite{komar2016quantum}. Photon loss in a quantum network's channels, links, and components limits the ability to realize these applications in practice.
Traditional methods to counter these issues, such as quantum error correction and quantum repeaters, can have demanding technical requirements including the need for quantum memories, multiple copies of a quantum state, or quantum information processing at the nodes. 
Unlike classical communication, the accumulated photon losses in a quantum network cannot be compensated for with phase-insensitive amplification due to the inherent noise added and the no-cloning principle~\cite{pandey2013quantum}.
However, these limitations can be partially mitigated using heralded noiseless amplification of light~\cite{ralph2009nondeterministic}, which conditionally boosts the amplitude of a coherent state without adding noise, presenting a promising avenue for overcoming these challenges.

Noiseless linear amplification (NLA) has been used in a variety of schemes to reduce the effects of loss on quantum communication for both continuous-variable \cite{xiang2010heralded} and discrete \cite{osorio2012heralded} encodings. For instance, it has been used to probabilistically amplify coherent states \cite{xiang2010heralded,ferreyrol2010implementation,usuga2010noise,zavatta2011high,donaldson2015experimental,haw2016surpassing} and two-mode squeezed vacuum (TMSV) \cite{ulanov2015undoing,dias2020quantum,karsa2022noiseless}, as well as polarization \cite{kocsis2013heralded}, path \cite{monteiro2017heralded}, and time-bin states \cite{bruno2016heralded}. In addition, various schemes utilizing NLA have been used to perform entanglement distillation~\cite{mauron2022comparison,ulanov2015undoing,monteiro2017heralded,dias2020quantum,liu2022distillation,slussarenko2022quantum}.
While noiseless amplification alone can improve the performance of quantum state distribution systems, some schemes also utilize quantum scissors \cite{seshadreesan2019continuous,dias2020quantum,slussarenko2022quantum} or photon subtraction techniques \cite{usuga2010noise,zavatta2011high,donaldson2015experimental}.
Unfortunately, NLA alone is ultimately limited in its ability to overcome loss in the direct transmission of superpositions of Fock states due to the inevitable mixing incurred by these channels. 
More precisely, the NLA process is not the inverse map of the loss process, and hence it cannot eliminate the noise added by it~\cite{mivcuda2012noiseless}. 

An alternative approach devised by Mi\v{c}uda et al.\@ supplements the use of NLA with a companion process called noiseless \textit{attenuation}~\cite{mivcuda2012noiseless}. In the proposed noiseless loss suppression (NLS) protocol, quantum states are noiselessly attenuated prior to their distribution across a lossy channel, and then noiseless amplification restores their amplitudes at the end of the channel. Crucially, this type of heralded noiseless attenuation can be faithfully inverted by the noiseless amplifier, as these operations are inverse maps of one another. When the pre-attenuation is sufficiently strong such that the effects of photon loss are negligible in comparison, the full NLS protocol can transmit quantum states with arbitrarily high fidelity at the cost of reduced probability of success. Mi\v{c}uda et al. showed that, in principle, any quantum state from an arbitrarily large Hilbert space could be transmitted across a single lossy channel in this way.
Here, we investigate the potential use of NLS for entanglement distribution of multi-partite entanglement through independent lossy channels.

We begin in Section \ref{sec:Micuda} by reviewing the original NLS protocol of Mi\v{c}uda et al.\@ for a single optical mode with states encoded in a superposition of vacuum and single-photon states. 
Then, to provide insight into our main results, in Section \ref{sec:experiment} we consider the application of the NLS protocol to bipartite entangled states by introducing a possible experimental setup involving a bipartite entangled state consisting of only a single photon with physical noiseless amplifiers based on quantum scissors.
We then abstract away the experimental configuration and generalize these bipartite states to $M$ modes sharing a single photon (i.e., W states) in Section \ref{sec:W} and to $n$-photon Greenberger–Horne–Zeilinger (GHZ) states in Section \ref{sec:GHZ}. Next, in Section \ref{sec:TMSV} we consider the application of NLS to TMSV states where higher-order photon terms complicate the analysis. Then in  Section \ref{sec:noon} we again consider multiple photons per mode by studying the NLS protocol on NOON states. In Section \ref{sec:necessity} we provide general arguments describing which states require noiseless attenuation in addition to noiseless amplification in order to recover their initial form, followed by a discussion and conclusion in Section \ref{sec:discussion}.

\section{Noiseless Loss Suppression}
\label{sec:Micuda}

The original NLS protocol devised by Mi\v{c}uda et al.~\cite{mivcuda2012noiseless} is shown in Fig.~\ref{fig:NLSdiagram}(a). Here, an input state $\hat{\rho}_{in}$ is sent across a single lossy channel $\mathcal{L}_{\tau}$ with amplitude transmittance $\tau$. Unaided, photon loss over this channel will generate noise and reduce the coherence of the transmitted output $\hat{\rho}_{out}$. The NLS scheme mitigates these effects with the introduction of two probabilistic devices: the noiseless attenuator and the noiseless amplifier. The noiseless attenuation is described by the quantum filter $\nu^{\hat{n}}$ where $\hat{n}=\hat{a}^\dagger\hat{a}$ is the number operator and $\nu<1$, which reduces Fock state amplitudes by $\ket{n}\rightarrow \nu^n \ket{n}$. This precedes the lossy channel, essentially suppressing the higher-photon-number terms which are most susceptible to photon loss. A noiseless amplifier, which similarly transforms Fock state amplitudes as $\ket{n}\rightarrow g^n\ket{n}$ with $g>1$, is then placed at the end of the channel with gain $g=1/(\nu\tau)$ chosen to restore the original amplitudes. 
In the limit $\nu\rightarrow 0$, noise terms generated by photon loss are eliminated, and the input state is faithfully (but conditionally) transmitted, i.e. $\hat{\rho}_{out}=\hat{\rho}_{in}$.

\begin{figure}[t]
    \centering
    \includegraphics[width=0.9\columnwidth,trim={250 30 240 100},clip]{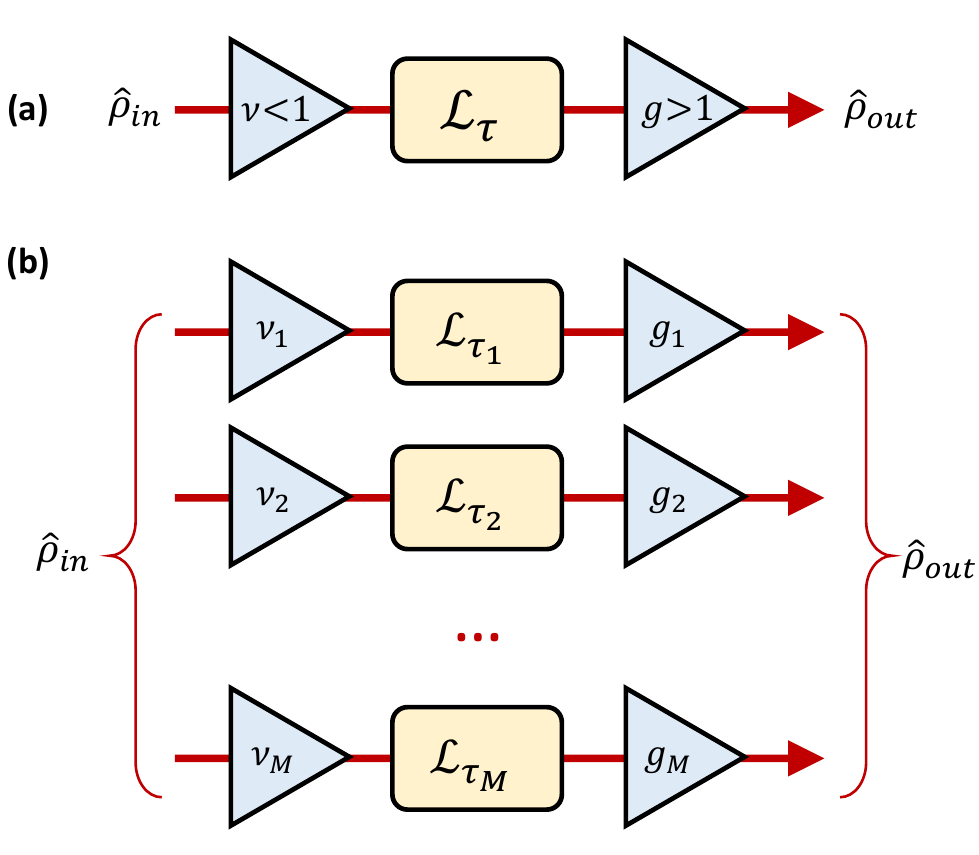}
    \caption{Basic illustration of the noiseless loss suppression (NLS) protocol~\cite{mivcuda2012noiseless}. Panel (a) shows the case of a single-mode input state $\hat{\rho}_{in}$ transmitted over one quantum channel, while (b) shows an extension of NLS to multimode states sent over $M$ channels. Each channel is preceded by a noiseless attenuator with an attenuation factor $\nu_k<1$, and followed with noiseless amplification by gain $g_k>1$. With appropriate choices of attenuation and gain, any added noise generated by the lossy channels $\mathcal{L}(\tau_k)$ can be mitigated.}
    \label{fig:NLSdiagram}
\end{figure}

Following Ref.~\cite{mivcuda2012noiseless}, we will illustrate the mechanics of NLS in more detail with a simple example: the single-rail ($M=1$) qubit $\ket{\psi} =c_{0}\ket{0} + c_{1}\ket{1}$. Sending this state through the lossy channel $\mathcal{L}_{\tau}$, we obtain a mixed output with added noise:
\begin{equation}
    \mathcal{L}_{\tau}\left(\ket{\psi}\bra{\psi}\right) = \ket{\tilde{\psi}} \bra{\tilde{\psi}} + (1-\tau^{2})\vert c_{1}\vert^{2} \ket{0}\bra{0}
\end{equation}
where $\ket{\tilde{\psi}} = c_{0}\ket{0} + \tau c_{1}\ket{1}$. To restore the single-photon amplitude, we can noiselessly amplify this state with gain $g=1/\tau$. This transforms the Fock state amplitudes as $\ket{0}\rightarrow \ket{0}$ and $\ket{1}\rightarrow g\ket{1}$, which yields the following output:
\begin{equation}
    \hat{\rho}_{amp}\propto \ket{\psi}\bra{\psi} + (1-\tau^{2})\vert c_{1}\vert^{2} \ket{0}\bra{0}.
\end{equation}
The original signal has been restored, but the extra vacuum noise term remains. Further amplification would reduce the size of the noise term relative to the signal but would over-amplify the single-photon amplitude and alter the $\ket{\psi}\bra{\psi}$ term in the mixture. 

However, this can be corrected if the lossy channel $\mathcal{L}_\tau$ is preceded by a noiseless attenuator. Prior to $\mathcal{L}_\tau$, we now have a pure attenuated state $\ket{\psi'}=c_0\ket{0}+\nu c_1\ket{1}$. Then after amplifying by $g=1/(\nu\tau)$ at the end of the channel, we obtain:
\begin{equation}
    \hat{\rho}_{out}\propto \ket{\psi}\bra{\psi} + (1-\tau^{2})\nu^2\vert c_{1}\vert^{2} \ket{0}\bra{0}.
\end{equation}
Now the vacuum noise term is suppressed by the attenuation factor $\nu^2$. 
In the limit $\nu\rightarrow 0$, which requires more extreme attenuation and amplification, arbitrarily high fidelity can be achieved.
Note, however, that both noiseless attenuation and heralded amplification have a success rate that scales with the attenuation and gain introduced into the channel. Hence, the NLS protocol introduces a trade-off between channel fidelity and success rate, with an exact single-mode NLS success probability expression provided in \cite{mivcuda2012noiseless}.

The benefits of NLS still hold true for any superposition of Fock states in a single mode, so long as noiseless amplification can be faithfully implemented at the end of the channel. Because the ideal noiseless amplification operator $g^{\hat{n}}$ is unbounded, this requires that we restrict our input states to a finite subspace spanned by Fock states $\ket{n}$ for $n\leq N$ where $N$ is the maximum photon number in the channel~\cite{pandey2013quantum,gagatsos2014heralded}. 
Then for the full sequence of noiseless attenuation, loss, and then noiseless amplification shown in Figure~\ref{fig:NLSdiagram}(a), this complete channel $\mathcal{M}$ can be described as follows:
\begin{equation}
    \mathcal{M}(\hat{\rho}_{in}) = \sum_{j=0}^{N} G_N(g) \hat{A}_j \nu^{\hat{n}} \hat{\rho}_{in} \nu^{\hat{n}} \hat{A}^\dagger_j G_N(g)
\end{equation}
where the Kraus operators $\hat{A}_j$ are given by:
\begin{equation}
    \hat{A}_j = \sum_{m=0}^{N-j}\sqrt{\binom{m+j}{j}}(1-\tau^2)^{j/2}\tau^m\ket{m}\bra{m+j}
\end{equation}
and each correspond to the loss of $j$ photons from the input state. Noiseless amplification is implemented by the filter ${G_N(g)=g^{-N}\sum_{n=0}^{N} g^n \ket{n}\bra{n}}$, which replicates the action of the operator $g^{\hat{n}}$ within the smaller Hilbert space of states with a maximum photon number $N$. The success probability for this filter $G_N$ is bounded below by $g^{-2N}$. 

Following Ref.~\cite{mivcuda2012noiseless}, it is straightforward to show that with a choice of gain $g=1/(\nu\tau)$, the resulting output state takes the following form:
\begin{equation}\label{eqn:NLSch}
    \hat{\rho}_{out}\propto \hat{\rho}_{in} + \sum_{j=1}^N{\nu^{2j}\hat{B}_j \hat{\rho}_{in} \hat{B}^\dagger_j}
\end{equation}
with newly defined composite operators $\hat{B}_j=\hat{A}_0^{-1}\hat{A}_j$. The inverse Kraus operator $\hat{A}_0^{-1}=\sum_{m=0}^{N}\tau^{-m}\ket{m}\bra{m}$ is equivalent to noiseless amplification by $g=1/\tau$, much like $\hat{A}_0$ is equivalent to noiseless attenuation by $\nu=\tau$. As these are exact inverses of one another, the lossless $j=0$ term leaves the input state unchanged. Meanwhile, the remaining noise terms corresponding to $j>0$ photon losses are suppressed by a factor $\nu^{2j}$. In the single-photon example, the loss of $j=1$ photons produced a single vacuum noise term, which acquired an additional factor of $\nu^2$. More generally, this channel tends to the identity $\mathcal{M}\rightarrow\mathcal{I}$ in the limit $\nu\rightarrow0$, and input states can be transmitted with arbitrarily high fidelity.

Here we propose a natural extension of NLS to entangled multimode states, shown in Figure~\ref{fig:NLSdiagram}(b). The input $\hat{\rho}_{in}$ propagates along $M$ lossy paths, each with a distinct transmittance $\tau_k$ for $k=1,...,M$. As pictured, the original protocol is repeated for every mode: each path $k$ is preceded by a noiseless attenuator with $\nu_k<1$ and followed by an amplifier with $g_k>1$. The operators for mode $k$ commute with those of all other modes, so we can apply the NLS protocol $M$ times successively, once per each mode. All together, this transforms the state as:
\begin{equation}
\begin{aligned}
    &\mathcal{M}_{tot}(\hat{\rho}_{in}) =\\ &\sum_{j_1=0}^N \cdots \sum_{j_M=0}^N \left(\prod_{k=0}^M g_k^{-2N} \nu_k^{2j_k} \right) \hat{B}^{(M)}_{j_M} \cdots \hat{B}^{(1)}_{j_1} \hat{\rho}_{in} \hat{B}^{(1)\dagger}_{j_1} \cdots \hat{B}^{(M)\dagger}_{j_M}\\
    \end{aligned}
    \label{eq:general_outcome}
\end{equation}
where the operators $\hat{B}^{(k)}_{j_k}$ are similarly defined for each mode $k$:
\begin{equation}\label{eqn:bk_ops}
    \hat{B}^{(k)}_{j_k} = \sum_{m=j}^{N}\sqrt{\binom{m}{j}}(1-\tau_k^2)^{j_k/2}(\nu_k \tau_k g_k )^{m-j_k}\ket{m-j_k}\bra{m}
\end{equation}
Note, that unlike in the definition of $\hat{B}_{j}$ in the single mode case (which follows the treatment in \cite{mivcuda2012noiseless}), this definition of a multimode $\hat{B}_{jk}^{(k)}$ does not enforce any particular relationship between $g_k$ and $\nu_k$ (i.e., they are free variables).

The resulting output state from the composite channel can then be written in the following form:
\begin{equation}\label{eqn:NLStot}
    \hat{\rho}_{out} \propto \tilde{\rho}_{0\cdots 0} + \sum_{\bm{j}\neq \{0\}} \left(\prod_{k=1}^M{\nu_k^{2j_k}}\right)\tilde{\rho}_{j_1\cdots j_M}
\end{equation}
where the separate sums over $j_k$ in Eq.~\ref{eq:general_outcome} have been replaced with a combined sum over all possible sequences ${\bm{j}=\{j_1,...,j_M\}}$, and we have defined the following (unnormalized) density operators:
\begin{equation}\label{eqn:rhonoise}
    \tilde{\rho}_{j_1\cdots j_m} \equiv \hat{B}^{(M)}_{j_M} \cdots \hat{B}^{(1)}_{j_1} \hat{\rho}_{in} \hat{B}^{(1)\dagger}_{j_1} \cdots \hat{B}^{(M)\dagger}_{j_M}
\end{equation}
The result of Eq.~\ref{eqn:NLStot} has been organized into two separate terms: (i) a \textit{signal} term $\tilde{\rho}_{0\cdots 0}$ corresponding to zero photons lost in all modes, denoted ${\bm{j}=\{0\}}$; and (ii) a collection of \textit{noise} terms $\tilde{\rho}_{j_1\cdots j_M}$ corresponding to all other unique sequences of photon loss with some $j_k>0$. As in the original NLS protocol, the goal is to suppress the noise relative to the signal while also ensuring maximum fidelity to the input state.

With the greater number of degrees of freedom in this multimode system, more investigation is required to find the optimal values of attenuation and amplification for NLS.
One possible strategy is to mimic the original NLS protocol and choose ${g_k=1/(\nu_k\tau_k)}$ for each mode.
With this choice of parameters, each $\hat{B}^{(k)}_{0}$ is equal to the identity when $j_k=0$ photons are lost. Thus from Eq.~\ref{eqn:rhonoise}, when $j_k=0$ for all $k$ modes, the signal is left unchanged from the input state: $\tilde{\rho}_{0\dots 0} = \hat{\rho}_{in}$.

Though one might suspect otherwise, this naive approach works perfectly well despite any path entanglement that may exist in the system. Following Eq.~\ref{eqn:NLSch}, each instance of $\mathcal{M}$ returns the original state $\hat{\rho}_{in}$ with added noise. The total channel $\mathcal{M}_{tot}$ in Eq.~\ref{eqn:NLStot} thus has a similar form. Now every noise term corresponding to $j_k$-photon losses in mode $k$ will be suppressed by a factor of $\nu_k^{2j_k}$, with the original state corresponding to the unique case of $j_k=0$ photons lost in all $M$ modes. Therefore, even if the $M$ modes are entangled, we can still compensate for photon loss locally with $M$ pairs of noiseless attenuators and amplifiers.

However, while the naive approach to NLS will work for any state, it is not the best solution in every case. In the remainder of this paper, we will examine the performance of NLS for several common forms of entangled input states and consider the appropriate values of attenuation and gain (e.g., look for solutions where $g_k\ne 1/(\nu_k\tau_k)$). Interestingly, there are some cases in which NLS can be accomplished while omitting noiseless attenuation, allowing for compensation with only noiseless amplifiers at the end of each channel.

\section{Bipartite single-photon entanglement}
\label{sec:experiment}

Here we consider the most straightforward extension of the original NLS protocol to entangled states: its application to the superposition of a single photon in two spatial modes. Practically, single-photon entangled states of this type result from transmitting a single photon through a beamsplitter. In analogy with the single-mode states considered in the original paper, the vacuum modes of single-photon entangled states encode quantum information. 
Single-photon entanglement of this type has applications throughout quantum information science and has motivated fundamental questions about whether its nonlocal properties are detectable in experiments \cite{tan1991nonlocality,dunningham2007nonlocality,das2021can}. Note that this state has been considered in a related context in both \cite{monteiro2017heralded,slussarenko2022quantum}.

To develop physical intuition for our more generalized results to follow, we will now examine a specific proposed optical experiment and examine the properties of the output state.
The proposed experiment is pictured in Figure~\ref{fig:bothchannels} and consists of an $M=2$ mode experiment.
Here a single photon is prepared in an equal superposition of the two optical modes in the state $\vert \psi\rangle=(1/\sqrt{2})(\vert 01\rangle+\vert 10\rangle)$ (i.e, a single photon is passed through a 50/50 beamsplitter, up to a phase).
The original NLS protocol \cite{mivcuda2012noiseless} is then applied independently in each mode in parallel.
At the beginning of each channel, noiseless attenuation is applied by using beamsplitters with amplitude transmittance $\nu_{1}$ and $\nu_{2}$ and conditioning on the detection of zero photons in the two detectors ~\cite{nunn2021heralding}, resulting in: 
\begin{equation}
    \ket{\psi'} \propto \nu_1\ket{10} + \nu_2\ket{01}.
\end{equation}
Channel loss is then applied via another set of beamsplitters, each with transmittance given by $\tau_{1,2}$. In contrast to noiseless attenuation, the output modes are not detected and will be traced over, generating a mixed state with the noise terms that NLS aims to suppress. Before tracing over the loss modes, the state is given by:
\begin{widetext}
\begin{equation}
\begin{aligned}
    \ket{\psi''} &\propto \nu_1\left(i\sqrt{1-\tau_{1}^{2}}\ket{00}\ket{1}_{R}\ket{0}_{R'} + \tau_{1}\ket{10}\ket{0}_{R}\ket{0}_{R'} \right) + \nu_2\left(i\sqrt{1-\tau_{2}^{2}}\ket{00}\ket{0}_{R}\ket{1}_{R'} + \tau_{2}\ket{01}\ket{0}_{R}\ket{0}_{R'} \right) \\
    &= \left(\nu_{1}\tau_{1}\ket{10}+\nu_{2}\tau_{2}\ket{01}\right)\ket{0}_{R}\ket{0}_{R'}+i\ket{00}\left(\nu_{1}\sqrt{1-\tau_{1}^{2}}\ket{1}_{R}\ket{0}_{R'}+\nu_{2}\sqrt{1-\tau_{2}^{2}}\ket{0}_{R}\ket{1}_{R'}\right)
\end{aligned}
\end{equation}
where $R$ and $R'$ indicate the reflected modes of the channel loss beampslitters in modes $1$ and $2$ respectively.
Finally, NLA is implemented using the ``quantum scissors''-based protocol considered in \cite{xiang2010heralded,winnel2020generalized} as pictured in the ``Noiseless Amplification'' box of Fig.~\ref{fig:bothchannels}. 
Within the amplification step an ancilla photon is introduced into one input of a tunable beamsplitter, one output of which is mixed with the input signal at a separate 50/50 beamsplitter, where a Bell measurement is performed. Successful amplification is heralded by the detection of one and only one photon at either of the detectors (an anti-coincidence event). The gain of each noiseless amplifier is determined by the amplitude transmittance $0\leq t_j \leq 1$
of the tunable beamsplitter,
with ${g_{j}=t_{j}/\sqrt{1-t_{j}^{2}}}$ \cite{winnel2020generalized}.

The resulting density matrix after post-selection 
on the outcomes pictured in Fig.~\ref{fig:bothchannels} and tracing over modes $R$ and $R'$ is given by:
\begin{align}
\hat{\rho}\propto &\left[\nu_1^2(1-\tau_1^2)+\nu_2^2(1-\tau_2^2)\right](1-t_{1}^{2})(1-t_{2}^{2})\vert 00\rangle \langle 00\vert + \nonumber \\
&\left(\nu_1\tau_1 t_{1}\sqrt{1-t_{2}^{2}}\vert 01 \rangle+\nu_2\tau_2\sqrt{1-t_{1}^{2}}t_{2}\vert 10\rangle\right)\left(\nu_1\tau_1t_{1}\sqrt{1-t_{2}^{2}}\langle 01 \vert+\nu_2\tau_2\sqrt{1-t_{1}^{2}}t_{2}\langle 10\vert\right) 
\end{align}
\end{widetext}
assuming all $t_j$, $\nu_{j}$, and $\tau_{j}$ are real and a $\pi/2$ phase shift is added upon reflection from each beamsplitter.

\begin{figure*}[t]
    \centering
    \includegraphics[width=1.8\columnwidth]{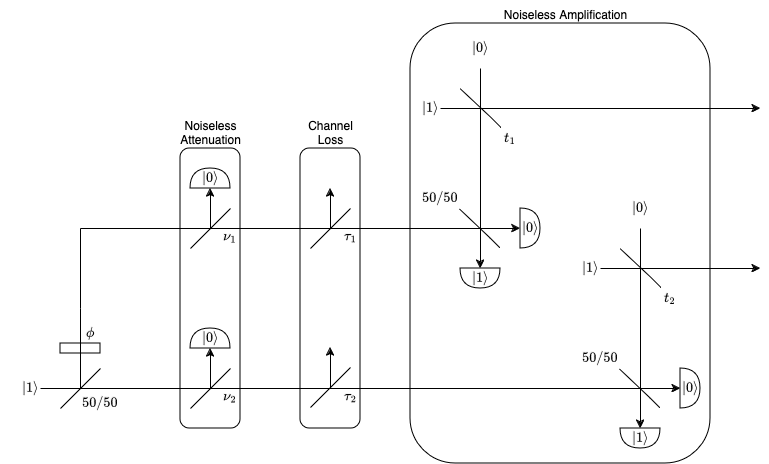}
    \caption{Proposed Optical Realization: This diagram illustrates the setup for the proposed quantum optics experiment. A single photon is initially generated in a balanced superposition of two distinct spatial modes. This superposition is achieved using a 50/50 beam splitter (BS) followed by a $\phi=\pi/2$ phase delay, ensuring the modes' desired phase relationship. The experiment proceeds through several stages as per the Noiseless Loss Suppression (NLS) protocol, which includes noiseless attenuation (NA), simulation of channel loss (L), and noiseless amplification (NLA). Each stage is represented as a distinct box within the figure. The amplitude transmittance of each beam splitter is annotated directly on the respective elements. It is important to note that all beam splitters in this setup are designed to impart no phase shift upon transmission while introducing a phase shift of $i$ ($\pi$/2 radians) on reflection. Detectors are strategically placed to monitor conditional outcomes and are labeled accordingly.}
    \label{fig:bothchannels}
\end{figure*}

Then, in order to recover a result analogous to the single-mode treatment in \cite{mivcuda2012noiseless} where the resultant state can be expressed as a mixture between the initial state and vacuum terms, we require the following condition
\begin{equation}
\begin{aligned}
\sqrt{\mu}=&\sqrt{2}\nu_1\tau_1t_{1}\sqrt{1-t_{2}^{2}}=\sqrt{2}\nu_2\tau_2 t_{2}\sqrt{1-t_{1}^{2}} \\
=&\sqrt{2}\nu_1\tau_1g_{1}\sqrt{\left(1-t_{1}^{2}\right)\left(1-t_{2}^{2}\right)}\\ =&\sqrt{2}\nu_2\tau_2g_{2}\sqrt{\left(1-t_{1}^{2}\right)\left(1-t_{2}^{2}\right)}\\
\end{aligned}
\end{equation}
Note that 
$0\leq \sqrt{\mu} \leq \sqrt{2}$ due to the restrictions on $\nu_{j}$, $\tau_{j}$, and $\phi_{j}$.
For convenience, we define 
\begin{equation}
\sigma=\left[\nu_1^2(1-\tau_1^2)+\nu_2^2(1-\tau_2^2)\right](1-t_{1}^{2})(1-t_{2}^{2})
\end{equation}
so that the state becomes 
\begin{equation}\label{eqn:rhof}
    \hat{\rho}\propto \mu\vert \psi\rangle\langle \psi\vert+ \sigma\vert 00\rangle\langle 00\vert.
\end{equation}

Our definition of $\mu$ can be equivalently expressed as the following restriction on attenuation and gain parameters in each channel 
\begin{equation}\label{eqn:balance}
    \frac{\nu_{1}\tau_{1}g_{1}}{\nu_{2}\tau_{2}g_{2}}=1
\end{equation}
which is analogous to the gain constraint $g=(\nu\tau)^{-1}$ in the single-mode case (see Ref.~\cite{mivcuda2012noiseless} and Sec.~\ref{sec:Micuda}) which recovers the form of the initial state with added vacuum terms. 
Note that applying the same constraint to each channel individually satisfies the two-mode relationship found here with ${\nu_1\tau_1g_1=\nu_2\tau_2g_2=1}$.
More generally, however, the number of free parameters has doubled, so other configurations of gain and attenuation that recover the equivalent output state are possible. It's only important that the overall amplitudes in modes 1 and 2 are balanced. This ``balancing condition'' in Eq.~\ref{eqn:balance} ensures the transmitted state has equal $\ket{01}$ and $\ket{10}$ amplitudes like the input state and is reminiscent of local filtering operations used in Procrustean filtering \cite{kwiat2001experimental,bennett1996concentrating} and nonlocal compensation of polarization-dependent loss \cite{kirby2019effect,jones2018tuning}.

To complete the NLS protocol, we must ensure that the vacuum noise terms can be made arbitrarily small while maintaining the form of Eq.~\ref{eqn:rhof}, e.g., while enforcing the balancing condition given by Eq.~\ref{eqn:balance}.
To this end, we consider the vacuum-signal ratio $\sigma/\mu$ which goes to zero as $\sigma\rightarrow 0$ or $\mu\rightarrow\infty$. The ratio is given by:
\begin{equation}
    \frac{\sigma}{\mu}=\frac{\sigma}{\sqrt{\mu}\sqrt{\mu}}=\frac{\nu_1^2(1-\tau_1^2)+\nu_2^2(1-\tau_2^2)}{2\nu_1\tau_1g_{1}\nu_2\tau_2g_{2}},
    \label{eq:ratio}
\end{equation}
where we have used both equivalent definitions of $\sqrt{\mu}$ in the denominator ($\nu_1\tau_1g_{1}=\nu_2\tau_2g_{2}$ in the resulting expression) as it makes the general behavior of the system more transparent but does not impact the results.
As expected, whenever the channels are lossless ($\tau_{1}=\tau_{2}=1$), the ratio goes to zero.

The first case of interest is when the single-mode NLS protocol is applied to each channel independently, with ${g_j=(\nu_j\tau_j)^{-1}}$.
Inserting this into Eq.~\ref{eq:ratio}, we obtain:
\begin{equation}
    \frac{\sigma}{\mu}\bigg|_{g_{j}=(\nu_{j}\tau_{j})^{-1}}=\frac{\nu_1^2(1-\tau_1^2)+\nu_2^2(1-\tau_2^2)}{2},
    \label{eq:nu_to_zero}
\end{equation}
which, for a fixed channel loss (fixed $\tau$), goes to zero as both $\nu_{j}\rightarrow 0$.
Hence, the original proposal from \cite{mivcuda2012noiseless} works in the two-mode setting when applied to each channel independently. 

\begin{figure}[b]
    \centering
    \includegraphics[width=0.9\columnwidth]{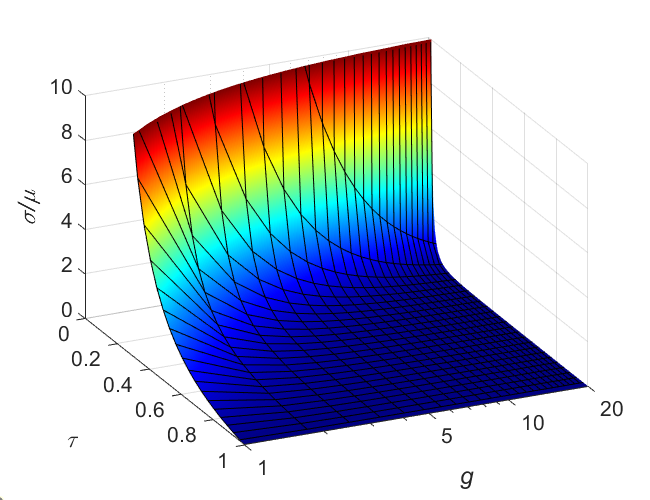}
    \caption{Plot of the vacuum-signal ratio $\sigma/\mu$ as a function of channel transmittance $\tau=\tau_{1}=\tau_{2}$ and NLA gain $g=g_{1}=g_{2}$. 
    Increasing the gain can make the vacuum amplitude arbitrarily small, even with significant channel loss (small $\tau$).}
    \label{fig:plot}
\end{figure}

However, the two-mode setting allows for an additional approach to make the ratio in Eq.~\ref{eq:ratio} arbitrarily small: fixing the $\nu_{j}$ terms and increasing the $g_{j}$.
This can be understood intuitively by setting $\nu_{1}=\nu_{2}$, resulting in
\begin{equation}
    \frac{\sigma}{\mu}\bigg|_{\nu_{1}=\nu_{2}}=\frac{(1-\tau_1^2)+(1-\tau_2^2)}{2\tau_1g_{1}\tau_2g_{2}}.
    \label{eq:nu1}
\end{equation}
In this case, the balancing condition only requires that ${g_1\tau_1=g_2\tau_2}$, allowing arbitrarily high gains in the denominator that drive the ratio $\sigma/\mu$ toward zero.
Furthermore, the $\nu$ terms cancel out entirely, for which $\nu_{j}=1$ is a special case with no noiseless attenuation.
In other words, the extra degrees of freedom afforded by the multi-mode case allow for a version of NLS which only uses noiseless amplification at the end of each channel.
Figure~\ref{fig:plot} shows the vacuum-signal ratio $\sigma/\mu$ for a specific example of this case (Eq.~\ref{eq:nu1}) with $\tau=\tau_{1}=\tau_{2}$ and $g=g_{1}=g_{2}$.
With higher gains, the vacuum amplitude $\sigma$ can be made arbitrarily small relative to the signal. 
Intuitively this occurs because in each term in the state $\ket{01}+\ket{10}$, there is a single photon component that can be amplified relative to the other term, and hence it's possible to re-weight these terms through amplification alone. Contrast this with the single-mode case in section \ref{sec:Micuda} that uses the state $\ket{0}+\ket{1}$; where the inability to change the coefficient of the vacuum term arbitrarily using amplification necessitates the use of noiseless attenuation.

In short, the NLS protocol can be applied as normal in each channel to faithfully transmit the state with low noise. However, the effect of noiseless attenuation is made completely redundant by the amplifiers for this kind of entangled input state. Noiseless attenuation provides no advantage as it only reduces the probability of success, forcing us to increase the gain at the end of each channel to compensate for the reduced amplitude.
This can be seen by considering the probability of success of the NLS protocol, which is taken directly from the form of Eq.~\ref{eqn:rhof} to be $P_s=\text{Tr}\{\mathcal{M}(\hat{\rho}_{in})\}= \mu + \sigma$.
Since both $\mu$ and $\sigma$ decrease as $\nu_{i}$ decreases, the probability of success of the NLS scheme is reduced as more noiseless attenuation is applied.
As we will see in later sections, the redundancy of the noiseless attenuation also holds more generally for M-mode W-states but not for M-mode GHZ states or TMSV states. 
In section \ref{sec:necessity}, we will provide an abstract argument for when this is possible with a general input state. 

\section{W state}
\label{sec:W}

Here we generalize the results of the previous section for one photon in a superposition between two modes to one photon in a superposition of $M$ modes. 
States of this form are known as $W$ states,
and have the property that any two modes remain entangled even when all other modes are discarded \cite{dur2000three}. The nonlocality of single-photon $W$ states has been studied closely \cite{heaney2011extreme}, with applications in long-baseline telescopy \cite{gottesman2012longer} and sensing \cite{guha2013reading}.

An $M$-mode $W$ state with a single-photon can be generated analogously to the two-mode state by replacing the beamsplitter with an $M$-mode multiport splitter, which generates
\begin{equation}
\begin{aligned}
    \vert W\rangle &\propto \vert100...0\rangle+\vert 010...0\rangle+...+\vert000...1\rangle\\
    &=\sum_{j=1}^{M}\vert j\rangle,\\
\end{aligned}
\end{equation}
where we define $\vert j\rangle$ as the state with the single photon in mode $j$ and all other modes unoccupied, and $\vert 0\rangle$ will indicate the vacuum state for all modes. Normalization is omitted and will be included at the end of the calculation.
If each channel now undergoes noiseless attenuation (NA) of magnitude $\nu_{j}$, the state becomes 
\begin{equation}
\begin{aligned}
    \vert W_{\text{NA}}\rangle &\propto\sum_{j=1}^{M}\nu_{j}\vert j\rangle.\\
\end{aligned}
\end{equation}

We now include channel loss by introducing a beamsplitter in each mode with transmittance $\tau_{j}$ where a phase of $i$ is added upon reflection and no phase is added in transmission. We further include the loss modes (the modes where a reflected photon enters) as $\vert j\rangle_{R}$ which indicates a single photon in the reflected port of the $j$th attenuation beam splitter and $\vert 0\rangle_{R}$ which means all reflected modes are in the vacuum states. After splitting all $M$ modes, the $W$ state then becomes 
\begin{equation}
\begin{aligned}
    \vert W_{\text{L}}\rangle&\propto\sum_{j=1}^{M}\tau_{j}\nu_{j}\vert j\rangle\otimes\vert 0\rangle_{R}+\sum_{k=1}^{M}i\sqrt{1-\tau_{k}^{2}}\nu_{k}\vert 0\rangle\otimes\vert k\rangle_{R},\\
\end{aligned}
\end{equation}
where the first sum of terms corresponds to instances where the photons are transmitted, and the second sum corresponds to when the photons are lost (reflected). We have assumed an $i$ phase term for reflection.
We can now include NLA in each channel with gain $g_{j}$, which acting in the $j$th mode takes $\vert j\rangle \rightarrow g_{j}\vert j\rangle$ and $\vert 0\rangle\rightarrow\vert 0\rangle$. The resulting $W$ state is then
\begin{equation}
\begin{aligned}
    \vert W_{\text{NLA}}\rangle&\propto\sum_{j=1}^{M}g_{j}\tau_{j}\nu_{j}\vert j\rangle\otimes\vert 0\rangle_{R}+\sum_{k=1}^{M}i\sqrt{1-\tau_{k}^{2}}\nu_{k}\vert 0\rangle\otimes\vert k\rangle_{R}\\
\end{aligned}
\end{equation}
If we now trace out the loss modes, we have the density matrix given by:
\begin{equation}
        \hat{\rho}\propto\sum_{j,k=1}^{M}g_{j}g_{k}\tau_{j}\tau_{k}\nu_{j}\nu_{k}\vert j\rangle\langle k\vert+\left(\sum_{l=1}^{M}\left(1-\tau_{l}^{2}\right)\nu_{l}^{2}\right)\vert 0\rangle \langle 0\vert.
\end{equation}

As an illustrative case, we consider using the same gain setting as in the single-mode NLS protocol, meaning each ${g_{j}=(\nu_{j}\tau_{j})^{-1}}$. We obtain the final output state
\begin{equation}
    \begin{aligned}
        \hat{\rho}_{out}&\propto\sum_{j,k=1}^{M}\vert j\rangle\langle k\vert+\left(\sum_{l=1}^{M}\left(1-\tau_{l}^{2}\right)\nu_{l}^{2}\right)\vert 0\rangle \langle 0\vert\\
        &=\vert W\rangle\langle W\vert+\frac{1}{M}\left(\sum_{l=1}^{M}\left(1-\tau_{l}^{2}\right)\nu_{l}^{2}\right)\vert 0\rangle \langle 0\vert\\
    \end{aligned}
\end{equation}
where we recover the original result that driving $\nu_{i}\rightarrow 0$ removes the vacuum terms.

More generally, we can recover the original $W$ state form whenever the following condition holds 
\begin{equation}
g_{k}\tau_{k}\nu_{k} = C_{W}, \quad k \in \{1, 2, \ldots, M\}
\label{eq:W_balance_condition}
\end{equation}
where $C_W$ is some arbitrary constant. This is equivalent to the balancing condition in Eq.~\ref{eqn:balance}, in which the product ${g_k\tau_k\nu_k}$ must be equal for two modes, now generalized to $M$ modes.

With this balancing condition enforced, the resulting density matrix is given by
\begin{equation}
        \hat{\rho}\propto C_{W}^{2}\vert W\rangle\langle W\vert+\frac{1}{M}\left(\sum_{l=1}^{M}\left(1-\tau_{l}^{2}\right)\nu_{l}^{2}\right)\vert 0\rangle \langle 0\vert.
\end{equation}
For fixed $\nu_{j}$ and $\tau_{j}$, the vacuum terms can still be made arbitrarily small relative to the $C_{W}^{2}$ coefficient of the $W$ state term by increasing the gain subject to the condition of Eq.~\ref{eq:W_balance_condition}.
The original NLS protocol can be applied to each mode independently, but it is advantageous to do away with the noiseless attenuators ($\nu=1$) and use only noiseless amplifiers. 
Hence, the same behavior for bipartite entanglement in Sec.~\ref{sec:experiment} appears for the more general W state, wherein noiseless amplification alone can faithfully preserve the form of the input state.

\begin{figure}[tb]
    \centering
    \includegraphics[width=\columnwidth]{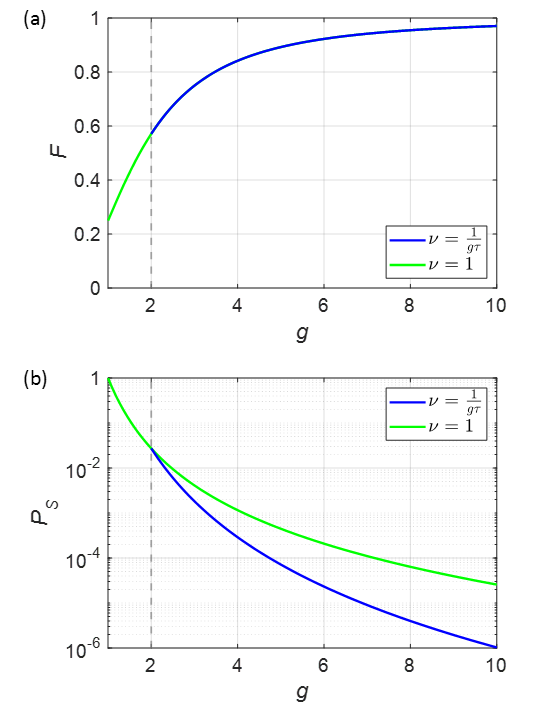}
\caption{(a) W state fidelity $F$ and (b) probability of success $P_s$ after NLS with $(\nu=\frac{1}{g\tau})$ and without $(\nu=1)$ noiseless attenuation, as a function of gain $g$ with all channels equal $\tau_{k}=\tau_{l}$ and $g_{k}=g_{l}$). In each plot $M=3$ and channel transmittance $\tau^{2}=0.25$ for all modes. Note that the regime $g<1/\tau=2$ is unphysical for the case that includes noiseless attenuation, hence the switch from green (light gray) to blue (dark gray) at the vertical dashed line in (a) (where blue (dark gray) and green (light gray) are both overlapping for $g\ge 2$).\label{fig:W_plots}}
\end{figure}

This can be formally seen by considering the fidelity of the output state after NLS
\begin{equation}\label{eqn:W_fidelity}
    F=\frac{C_{W}^{2}}{C_{W}^{2}+\frac{1}{M}\sum_{l=1}^{M}\left(1-\tau_{l}^{2}\right)\nu_{l}^2},
\end{equation}
as well as the probability of success $P_{s}$ given by
\begin{equation}\label{eqn:W_pos}
    P_s =\left(\prod_{k=1}^M g_k^{-2}\right)\left(C_{W}^{2}+\frac{1}{M}\sum_{l=1}^{M}\left(1-\tau_{l}^{2}\right)\nu_{l}^2\right).
\end{equation}
Here we have assumed that noiseless amplification is performed with the filter ${G_N(g)=g^{-N}\sum_{n=0}^{N} g^n \ket{n}\bra{n}}$ described in Sec.~\ref{sec:Micuda}, with $N=1$~\footnote{All probability of success calculations beyond Sec.~\ref{sec:experiment} assume NLA is implemented with the filter $G_N$, with $N$ set to the maximum number of photons in each channel. For the W and GHZ states $(N=1)$, we could also have used the physical amplifiers from in Sec.~\ref{sec:experiment}, which are identical to the $G_N$ filter apart from an additional prefactor. However, the simple $G_N$ filter is much easier to generalize to states with $N>1$ photons, allowing for better comparison with NOON and TMSV states. Crucially, the success probability of the $G_N$ filter exhibits a very similar exponential dependence on amplifier gain that we would also expect from the physical amplifiers in Ref.~\cite{ralph2009nondeterministic}. }. 
These $F$ and $P_{s}$ expressions encapsulate two special cases of interest: the original NLS protocol with $g_{j}=(\nu_{j}\tau_{j})^{-1}$ (hence, $C_{W}=1$), and the absense of noiseless attenuation (all $\nu_{j}=1$ and $g_{j}=C_W/\tau_j$).
In the former, noise terms are suppressed by a factor $\nu_l$ and fidelity approaches one in the limit $\nu_l\rightarrow0$. In the latter, $C_{W}$ can be increased without bound by appropriately adjusting each $g_{k}$, and we see fidelity can still be pushed arbitrarily close to one as $C_{W}\rightarrow\infty$, even without noiseless attenuation.

In fact, as can be seen in Appendix~\ref{app:general_expr}, when Eq. \ref{eqn:W_fidelity} is expressed such that all channels are equal (all $\tau_{k}=\tau$ and $g_{k}=g$), the two special cases with and without noiseless attenuation are exactly equivalent.
This is seen in Fig.~\ref{fig:W_plots}a, where both equivalent expressions for fidelity approach unity as gain increases. However, the probability of success differs when noiseless attenuation is omitted.

In Fig.~\ref{fig:W_plots}b, $P_s$ is plotted as a function of gain for $M=3$ lossy channels with and without noiseless attenuation with all channels equal (all $\tau_{k}=\tau_{l}$ and $g_{k}=g_{l}$). As seen in the two-mode case in Sec.~\ref{sec:experiment}, noiseless attenuation of $W$ states drastically reduces the probability of success while providing no advantage over noiseless amplification alone.

\section{GHZ State}
\label{sec:GHZ}

To this point, we have only considered states where a single-photon is present in a superposition over some number of modes. Here we will generalize further to states of the GHZ form which are a superposition of $M$ photons, one in each of $M$ modes, with the $M$ mode vacuum. GHZ states can be used to develop a quantum network of clocks \cite{komar2014quantum} and have been used for room-scale demonstrations with superconducting systems for modular quantum computing \cite{zhong2021deterministic}. The GHZ states represent a type of entanglement at odds with W-state entanglement. In particular, the GHZ states can be reduced to a completely separable state through measurement on a single qubit in the computational basis and cannot be transformed into a W state (or the reverse, W to GHZ) through local operations assisted with classical communication (LOCC) \cite{dur2000three}. In an intuitive sense, the $N$-mode GHZ state is actually more similar to the single-photon state used in the original NLS protocol \cite{mivcuda2012noiseless}, as all information is encoded in a superposition of presence or complete absence of a photon in all modes, reminiscent of the $\vert 0\rangle+\vert 1\rangle$ state in a single mode. As we will see, the GHZ states, in contrast to the W-states of the prior section, do in fact require the noiseless attenuation step in order to perform NLS.

We begin by defining the $M$-mode GHZ state:
\begin{equation}
    \begin{aligned}
        \vert \text{GHZ}\rangle &= \frac{1}{\sqrt{2}}\left(\vert 000...0\rangle+\vert 111...1\rangle\right). 
    \end{aligned}
\end{equation}
The first step of the NLS protocol is to apply noiseless attenuation of magnitude $\nu_{j}$ to each mode independently, resulting in
\begin{equation}
    \begin{aligned}
        \vert \text{GHZ}_{\text{NA}}\rangle &\propto\vert 000...0\rangle+\nu_{1}\nu_{2}\nu_{3}...\nu_{M}\vert 111...1\rangle\\
        &=\vert 000...0\rangle+\prod_{k=1}^{M}\nu_{k}\vert 111...1\rangle.\\
    \end{aligned}
\end{equation}

We now include channel loss by including a beamsplitter in each mode with transmittance $\tau_{j}$ where $j$ labels the mode and a phase of $i$ is added on reflection:
\begin{widetext}
\begin{equation}
    \begin{aligned}
        \vert \text{GHZ}_{\text{L}}\rangle &\propto\vert 000...0\rangle\otimes\vert 000...0\rangle_{R}+\left(\sum_{\mathbf{j}\in B^{M}}\left[\prod_{k=1}^{M}\left(\tau_{k}\right)^{j_{k}}\left(i\sqrt{1-\tau_{k}^{2}}\right)^{\neg j_{k}}\nu_{k}\right]\vert j_{1} j_{2} j_{3}...j_{M}\rangle\otimes\vert \neg\left(j_{1} j_{2} j_{3}...j_{M}\right)\rangle_{R} \right).\\ 
    \end{aligned}
\end{equation}
Here, $\mathbf{j}$ denotes an $M$-tuple representing a binary string of length $M$. The set of all such $M$-tuples, corresponding to all possible binary strings of length $M$, is represented as $B^M$. Each M-tuple can be written as $(j_{1}, j_{2}, j_{3},..., j_{M})$, where the subscript indicates the position of a bit in the binary string. 
We use the logical $\neg$ symbol to invert a bit string (e.g., $\neg 010=101$), have included the additional beamsplitter loss (reflection) modes as $\vert \cdot\rangle_{R}$, and include a phase shift of $i$ upon reflection.

Finally, we can now include noiseless attenuation in each channel with gain $g_{j}$, which acting in the $j$th mode takes $\vert j\rangle \rightarrow g_{j}\vert j\rangle$ and $\vert 0\rangle\rightarrow\vert 0\rangle$. 
The resulting GHZ state is then 
\begin{equation}
    \begin{aligned}
        \vert \text{GHZ}_{\text{NLA}}\rangle &\propto\vert 000...0\rangle\otimes\vert 000...0\rangle_{R}\\
        &+\left(\sum_{\mathbf{j}\in B^{M}}\left[\prod_{k=1}^{M}\left(g_{k}\right)^{j_{k}}\left(\tau_{k}\right)^{j_{k}}\left(i\sqrt{1-\tau_{k}^{2}}\right)^{\neg j_{k}}\nu_{k}\right]\vert j_{1} j_{2} j_{3}...j_{M}\rangle\otimes\vert\neg\left(j_{1} j_{2} j_{3}...j_{M}\right)\rangle_{R} \right). \\
    \end{aligned}
\end{equation}
For ease of calculation we rewrite this as

\begin{equation}
    \begin{aligned}
        \vert \text{GHZ}_{\text{NLA}}\rangle &\propto\left(\vert 000...0\rangle+\left[\prod_{k=1}^{M}g_{k}\tau_{k}\nu_{k}\right]\vert 111...1\rangle\right)\otimes\vert 000...0\rangle_{R}\\
        &+\left(\sum_{\mathbf{j} \in B^M \setminus 1^M}\left[\prod_{k=1}^{M}\left(g_{k}\right)^{j_{k}}\left(\tau_{k}\right)^{j_{k}}\left(i\sqrt{1-\tau_{k}^{2}}\right)^{\neg j_{k}}\nu_{k}\right]\vert j_{1} j_{2} j_{3}...j_{M}\rangle\otimes\vert\neg\left(j_{1} j_{2} j_{3}...j_{M}\right)\rangle_{R} \right). \\
    \end{aligned}
\end{equation}
where $B^M \setminus {1^M}$ is the set of all possible binary strings of length $M$ excluding the one where all bits are 1. We can now trace out the reflected modes to obtain:
\begin{equation}
    \begin{aligned}
        \hat{\rho}&\propto \left(\ket{000...0} + \left[\prod_{k=1}^{M}g_{k}\tau_{k}\nu_{k}\right]\ket{111...1} \right)\left(\bra{000...0} + \left[\prod_{k=1}^{M}g_{k}\tau_{k}\nu_{k}\right]\bra{111...1} \right)\\
        &+\left(\sum_{\mathbf{j}\in B^M \setminus {1^M}}\left[\prod_{k=1}^{M}\left(g_{k}^{2}\right)^{j_{k}}\left(\tau_{k}^{2}\right)^{j_{k}}\left(1-\tau_{k}^{2}\right)^{\neg j_{k}}\nu_{k}^{2}\right]\vert j_{1} j_{2} j_{3}...j_{M}\rangle\langle j_{1} j_{2} j_{3}...j_{M}\vert \right)
    \end{aligned}
\end{equation}

We see that whenever the condition 
\begin{equation}\label{eqn:ghz_balance}
\prod_{k=1}^{M}g_{k}\tau_{k}\nu_{k}=1
\end{equation}
is met, we recover 
\begin{equation}
    \begin{aligned}
        \hat{\rho}\propto \vert \text{GHZ}\rangle\langle \text{GHZ}\vert +\frac{1}{2}\left(\sum_{\mathbf{j}\in B^M \setminus {1^M}}\left[\prod_{k=1}^{M}\left(g_{k}^{2}\right)^{j_{k}}\left(\tau_{k}^{2}\right)^{j_{k}}\left(1-\tau_{k}^{2}\right)^{\neg j_{k}}\nu_{k}^{2}\right]\vert j_{1} j_{2} j_{3}...j_{M}\rangle\langle j_{1} j_{2} j_{3}...j_{M}\vert \right).
    \end{aligned}
\end{equation}
This condition is akin to that of the single-mode case, which this reduces to for $k=1$, becoming $g\tau\nu=1$, the same as found in \cite{mivcuda2012noiseless} and in section \ref{sec:Micuda}.

If we perform single-mode NLS in each channel with $g_k=(\nu_k\tau_k)^{-1}$, the fidelity of the output state is given by 
\begin{align}
    F=&\frac{1+\frac{1}{4}\prod_{k=1}^{M}(1-\tau_k^2)\nu_k^2}{1+\frac{1}{2}\left(\sum_{\mathbf{j}\in B^M \setminus {1^M}}\left[\prod_{k=1}^{M}\left((1-\tau_{k}^{2})\nu_k^2\right)^{\neg j_{k}}\right]\right)}
\end{align}
with a probability of success:
\begin{align}
    P_s =&  \left(\prod_{k=1}^M g_k^{-2}\right)
    \left[1+\frac{1}{2}\left(\sum_{\mathbf{j}\in B^M \setminus {1^M}}\left[\prod_{k=1}^{M}\left((1-\tau_{k}^{2})\nu_k^2\right)^{\neg j_{k}}\right]\right)\right]
\end{align}
Following our analysis of the $W$ state, we can explore the effect of noiseless amplification alone on GHZ states by setting all $\nu_j=1$ and introducing an arbitrary constant $g_j\tau_j=C_{GHZ}$. In this case, the output state takes a similar form:
\begin{equation}
    \begin{aligned}
        \hat{\rho}&\propto \frac{1}{2}\left(\ket{000...0} + C_{GHZ}^M\ket{111...1} \right)\left(\bra{000...0} + C_{GHZ}^M\bra{111...1} \right)\\
        &+ \frac{1}{2}\left(\sum_{\mathbf{j}\in B^M \setminus {1^M}}\left[\prod_{k=1}^{M}\left(C_{GHZ}^2\right)^{j_{k}}\left(1-\tau_{k}^{2}\right)^{\neg j_{k}}\right]\vert j_{1} j_{2} j_{3}...j_{M}\rangle\langle j_{1} j_{2} j_{3}...j_{M}\vert \right)
    \end{aligned}
\end{equation}
\end{widetext}
As seen before, driving the gain higher in the limit ${C_{GHZ}\rightarrow\infty}$ will suppress the noise terms generated by photon loss. However, the balancing condition in Eq.~\ref{eqn:ghz_balance} is only met for $C_{GHZ}=1$, so the input state is no longer preserved at higher gain values. General expressions for fidelity and probability of success are given in Appendix~\ref{app:general_expr}.

\begin{figure}[tb]
    \centering
    \includegraphics[width=\columnwidth]{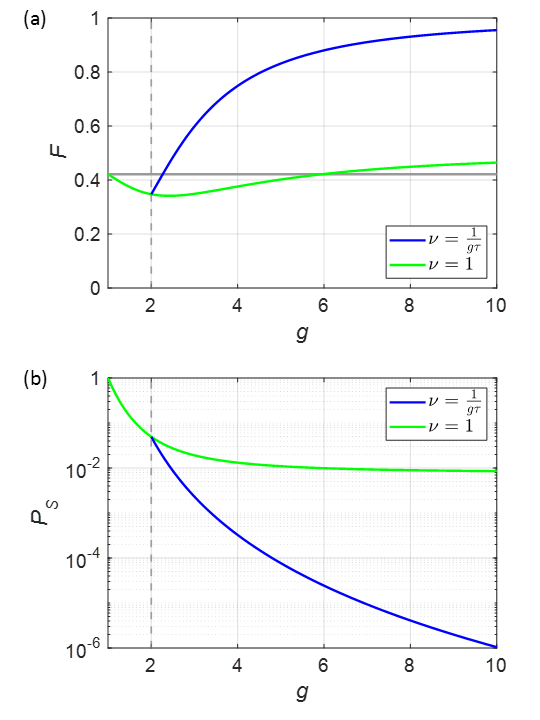}
\caption{(a) GHZ state fidelity $F$ and (b) probability of success $P_s$ after NLS with $(\nu=\frac{1}{g\tau})$ and without $(\nu=1)$ noiseless attenuation, as a function of gain. Channel transmittance $\tau^2=0.25$ for all $M=3$ modes. Note that the regime $g<1/\tau=2$ is unphysical for the case that includes noiseless attenuation. The original output fidelity for pure loss without noiseless attenuation or amplification $(\nu=g=1)$ is indicated by the solid gray horizontal line.\label{fig:GHZ_plots}}
\end{figure}

Each of these cases, with and without noiseless attenuation, are compared in Fig.~\ref{fig:GHZ_plots} for $M=3$ modes with identical loss. Ordinary NLS with $g_k=(\nu_k\tau_k)^{-1}$ allows for fidelities arbitrarily close to one as gain is increased, at the cost of vanishing probability of success. In contrast, direct amplification without noiseless attenuation reduces fidelity for low values of gain $g\lesssim 6$ and only ever reaches $F=\frac{1}{2}$ in the limit of infinite gain. In that limit, the amplified output state is simply the product state $\ket{111}$, so all entanglement has been lost. 
Notably, the probability of success tends to a nonzero value at infinite gain $P_s\rightarrow \tau^{2M}/2$. This is also what we would expect using physical single-photon amplifiers like that in Section~\ref{sec:experiment}. When the gain is infinite, the incoming signal photons are effectively just replaced with unentangled ancilla photons, and probability of success is determined entirely by the transmission of the signal photons across all lossy channels $\tau^{2M}$, which heralds successful amplification once they are detected.

To this point, we have only considered states with at most one photon per mode. In the next two sections, we relax this condition and study states with multiple photons per mode and assess the impact of NLS. 

\section{Two-Mode Squeezed Vacuum}
\label{sec:TMSV}

Entanglement in quantum optics experiments is commonly generated via a nonlinear process that results in a two-mode squeezed vacuum state (TMSV). 
These states are given by:
\begin{equation}
    \ket{\text{TMSV}}=\sqrt{1-\gamma}\sum_{n=0}^\infty{\gamma^n}\ket{n}_{1}\ket{n}_{2}.
\end{equation}
States of this form are often considered in the context of entanglement distillation with noiseless amplifiers, which is central to many continuous variable quantum communication protocols~\cite{ralph2009nondeterministic,ulanov2015undoing,seshadreesan2019continuous,liu2022distillation}. Here we also consider noiseless amplification of TMSV states as a benchmark against a full NLS protocol (noiseless attenuation included), comparing their utility for high-fidelity transmission across a lossy channel.

In order to apply the NLS protocol to this state, we first attenuate modes $1$ and $2$ by $\nu_{1}$ and $\nu_{2}$, resulting in another TMSV with a smaller amplitude 
\begin{equation}
\ket{\text{TMSV}_{\text{NA}}}\propto\sum_{n=0}^\infty{\gamma^n}(\nu_1\nu_2)^n\ket{n}_{1}\ket{n}_{2}.
\end{equation}
Photon loss is then applied using beamsplitters with transmission coefficients $\tau_{1}$ and $\tau_{2}$ and then traced over resulting in a mixed state of the form:
\begin{equation}\label{eqn:rho_tmsv_full}
    \hat{\rho}\propto \sum_{j,k=0}^{\infty}\ket{\psi_{jk}}\bra{\psi_{jk}}
\end{equation}
where the loss of $j$ photons from mode $1$ and $k$ photons from mode $2$ corresponds to the following (unnormalized) state:
\begin{equation}\label{eqn:psi_tmsv}
    \ket{\psi_{jk}}=\sum_{n\geq j,k} c_{njk}\ket{n-j}_{1}\ket{n-k}_{2}
\end{equation}
with the coefficients:
\begin{equation}
\begin{aligned}
    &c_{njk}\equiv \\
    &\sqrt{\binom{n}{j}\binom{n}{k}} (\gamma \nu_1 \nu_2)^n (\tau_1)^{n-j} (\tau_2)^{n-k} (i\sqrt{1-\tau_1^2})^j (i\sqrt{1-\tau_2^2})^{k}. \\
    \end{aligned}
\end{equation}
After noiseless amplification with the filters $G_{N}(g_l)$ in each mode $l=1,2$, we modify these coefficients with two extra gain factors and introduce a maximum number of photons $N$ in each mode:
\begin{equation}\label{eqn:cnjk_prime}
    c'_{njk} \equiv \begin{cases}
        c_{njk} g_1^{n-j} g_2^{n-k} & \text{if } n-j \leq{N} \text{ and } n-k \leq N \\
        0 & \text{otherwise}
    \end{cases}
\end{equation}
The first term of the output state $\hat{\rho}$ corresponds to the case of no photon loss $(j=k=0)$. The Fock state coefficients of this lossless term are:
\begin{equation}
    c'_{n00}=(\gamma\nu_1\nu_2\tau_1\tau_{2}g_{1}g_{2})^n
\end{equation}
which corresponds to a pure TMSV state. To ensure these coefficients are the same as for the initial state, we arrive at the following condition:
\begin{equation}\label{eqn:TMSV_balance}
    (\nu_1\tau_1 g_1)(\nu_2\tau_2 g_2) = 1,
\end{equation}
which is equivalent to the GHZ state balancing condition in the previous section. Satisfying this condition, the output once again consists of the original state plus added noise. 
As expected, if we implement NLS by setting $g_1=1/(\nu_1\tau_1)$ and $g_2=1/(\nu_2\tau_2)$ and consider the limits $\nu_{1,2}\rightarrow0$, the coefficients scale as
\begin{equation}\label{eqn:TMSV_scaling}
    c^{(NLS)}_{njk}\propto \nu_1^j\nu_2^k, 
\end{equation}
and all terms with $j>0$ ($k>0$) will disappear as $\nu_1\rightarrow 0$ ($\nu_2\rightarrow0$).
Hence, we conclude that the TMSV behaves similarly to the GHZ states in that the application of NLS requires noiseless attenuation and the mixing due to photon loss can be made arbitrarily small.

The fidelity for a general output state can be calculated as follows:
\begin{align}
    F&=\frac{\sum_{j,k}|\langle \psi | \psi'_{jk}\rangle|^2}{\sum_{j,k}|\psi'_{jk}|^2} \\
    &=\frac{(1-\gamma^2)\sum_{j=0}^\infty\left(\sum_{n=j}^{N+j}|c'_{njj}|\gamma^{(n-j)}\right)^2}{\sum_{j,k=0}^\infty\sum_{n\geq j,k}|c'_{njk}|^2} \label{eqn:TMSV_fidelity}  
\end{align}
where $\ket{\psi'_{jk}}$ are defined as in Eq.~\ref{eqn:psi_tmsv} with the modified coefficients $c'_{njk}$ given in Eq.~\ref{eqn:cnjk_prime}. The probability of success is given by:
\begin{align}
    P_s&=(g_1g_2)^{-2N}(1-\gamma^2)\sum_{j,k}|\psi'_{jk}|^2.
\end{align}
Following Eq.~\ref{eqn:TMSV_scaling}, all but the $j=k=0$ terms in the above expressions will vanish in the limit of $\nu_{1,2}\rightarrow 0$, leaving only sums over the coefficients $c'_{n00}=\gamma^n$ with ${g_l=(\nu_l\tau_l)^{-1}}$. Evaluating Eq.~\ref{eqn:TMSV_fidelity} in this limit confirms that fidelity approaches unity for sufficiently large $N$.

Similar to the GHZ state, we compare the fidelity and probability of success for two special cases. The first is the typical NLS protocol with each ${g_l=(\nu_l\tau_l)^{-1}}$, which satisfies the balancing condition in Eq.~\ref{eqn:TMSV_balance}. The second case uses no noiseless attenuation $(\nu_l=1)$ and defines the arbitrary constant $C_{TMSV}=g_1\tau_1=g_2\tau_2$, which does increase state purity as $C_{TMSV}\rightarrow\infty$ but does not satisfy the same balancing condition.

\begin{figure}[ht]
    \centering
    \includegraphics[width=\columnwidth]{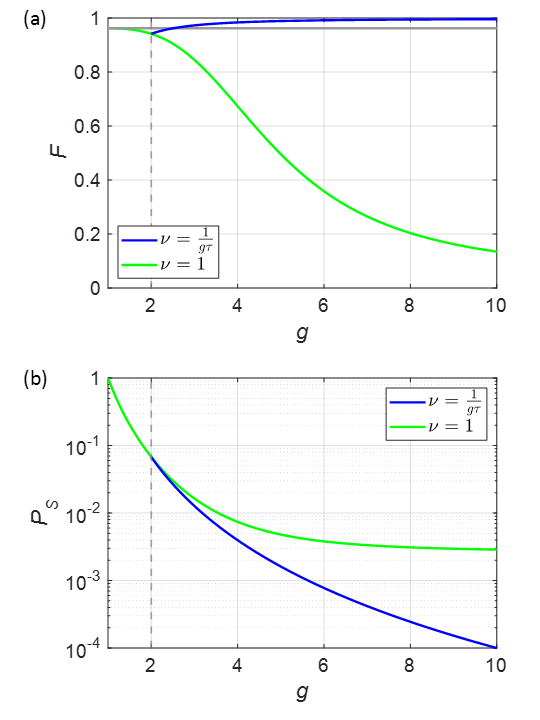}
\caption{(a) TMSV state fidelity $F$ and (b) probability of success $P_s$ after NLS with $(\nu=\frac{1}{g\tau})$ and without $(\nu=1)$ noiseless attenuation, as a function of gain. TMSV amplitude is $\gamma=0.2$, truncated at $N=1$. Channel transmittance $\tau^2=0.25$ for all modes. Note that the regime $g<1/\tau=2$ is unphysical for the case that includes noiseless attenuation.}\label{fig:TMSV_plots}
\end{figure}

The main results are shown in Fig.~\ref{fig:TMSV_plots}. The output fidelity and probability of success are plotted for a TMSV input with amplitude $\gamma=0.2$, equal channel losses of $\tau_1=\tau_2=0.5$, and a maximum photon number $N=1$ for noiseless amplification. Because of this truncation, the output state will lack higher-order terms and thus fidelity can never reach unity in the limit of high gain. This can be overcome to an extent by increasing $N$, but at the expense of probability of success since $P_s\propto g^{-2N}$. Even so, Fig.~\ref{fig:TMSV_plots}a shows very high fidelity can be achieved when retaining only single-photon terms, with $F\rightarrow 1-\gamma^4$ in the limit of infinite gain. Note that the increase in fidelity appears quite modest, because the initial value is already quite high without applying any noiseless attenuation or amplification $(\nu=g=1)$. This is owed to the large vacuum term in the initial TMSV, ensuring large overlap with a state sent through a pure loss channel. Importantly, though, NLS restores the purity of the original entangled state by extinguishing the noise terms in the output state mixture.

This is in stark contrast to the case where noiseless attenuation is omitted, which despite the marginally improved probability of success, shows fidelity decreasing to a small finite value, reaching $F\rightarrow\gamma^2-\gamma^4$ in the limit of infinite gain with $N=1$. Noiseless amplification on its own weights the highest-number Fock states above all others, so $\rho\rightarrow\ket{N}\bra{N}$ in the limit of high gain. Purity increases, but all entanglement is lost in this limit, and fidelity with the original TMSV is minimal. As shown by Ulanov et al.~\cite{ulanov2015undoing}, noiseless amplification with certain intermediate values of gain can still distill entanglement beyond that of the original TMSV state. In that scenario, the purity of the distilled output states is ensured by the small squeezing term of the initial state, $\gamma\ll 1$.  In that case, the small $\gamma$ plays an analogous role to $\nu$ of noiseless attenuation in the NLS approach since it suppresses the relative weight of the non-vacuum terms prior to transmission through the lossy channel.
Although we have chosen the parameters $\nu_k=1/(g\tau_k)$ to maximize fidelity with the initial state, noiseless attenuation paired with the optimal gain values of Ref.~\cite{ulanov2015undoing} can instead be chosen to maximize entanglement.

\section{NOON states}
\label{sec:noon}

The NOON states are a type of bipartite entangled state of $N$ photons where all are either in one mode or the other. 
These states are of particular interest in quantum metrology due to the scaling of interference effects, and hence phase sensitivity, with photon number~\cite{dowling2008quantum,lang2014optimal}. NOON states have also recently been experimentally demonstrated with frequency bin states which could ultimately be useful for high-dimensional quantum key distribution (QKD) \cite{lee2024noon}.
Beyond practical applications, NOON states are also of fundamental interest since, in the limit of large $N$, the entanglement is between macroscopic systems.

The NOON state is given by: 
\begin{equation}
    \ket{\psi}=\frac{1}{\sqrt{2}}\left(\ket{n}_1\ket{0}_2+\ket{0}_1\ket{n}_2\right)
\end{equation}
To apply NLS, we first separately apply noiseless attenuation to each mode with attenuation coefficients $\nu_{1}$ and $\nu_{2}$, resulting in
\begin{equation}
    \ket{\psi_{NA}}\propto \nu_1^{n}\ket{n}_1\ket{0}_2 + \nu_2^n\ket{0}_1\ket{n}_2.
\end{equation}
Photon loss in each mode is now included through the use of beamsplitters with amplitude transmittances $\tau_{1,2}$, resulting in
\begin{equation}
\begin{aligned}
        &\ket{\psi_{L}}\propto\\
        & \nu_1^{n}\sum_{j=0}^n \sqrt{\binom{n}{j}}\tau_1^{n-j}(i\sqrt{1-\tau_1^2})^j\ket{n-j}_1\ket{j}_{R}\ket{0}_2\ket{0}_{R'} \\
        &+ \nu_2^{n}\sum_{j=0}^n \sqrt{\binom{n}{j}}\tau_2^{n-j}(i\sqrt{1-\tau_2^2})^j\ket{0}_1\ket{0}_{R}\ket{n-j}_2\ket{j}_{R'}.\\
\end{aligned}
\end{equation}
where $R$ and $R'$ are the output ports of the photon-loss beamsplitters in modes $1$ and $2$, respectively. 
We now complete the NLS protocol by tracing out the $a$ and $b$ modes and applying noiseless amplification, resulting in

\begin{equation}\begin{aligned}
    &\hat{\rho}_{out} \propto\\
    &\left( \nu_1^{n}\tau_1^n g_1^n\ket{n}_1\ket{0}_2 + \nu_2^{n}\tau_2^n g_2^n\ket{0}_1\ket{n}_2 \right)\left( c.c. \right) \\
    &+ \nu_1^{2n}\sum_{j=1}^n \binom{n}{j}(\tau_1g_1)^{2(n-j)}(1-\tau_1^2)^j\ket{n-j}_1\ket{0}_2\bra{n-j}_1\bra{0}_2 \\
    &+ \nu_2^{2n}\sum_{j=1}^n \binom{n}{j}(\tau_2g_2)^{2(n-j)}(1-\tau_2^2)^j\ket{0}_1\ket{n-j}_2\bra{0}_1\bra{n-j}_2.\\
\end{aligned}
\end{equation}
The first term, corresponding to zero photon loss, will be proportional to the original NOON state if we satisfy the familiar balancing condition $\nu_1\tau_1g_1=\nu_2\tau_2g_2=C_{N}$, analogous to the $W$ state case (see Eq.~\ref{eq:W_balance_condition}). 

The fidelity of the output state is given by:
\begin{equation}
    F=\frac{C_{N}^{2n}}{C_{N}^{2n}+\frac{1}{2}\sum_{l=1}^{2}\sum_{j=1}^{n}\binom{n}{j}C_{N}^{2(n-j)}\left[(1-\tau_{l}^2)\nu_{j}^{2}\right]^j}
\end{equation}
with the corresponding probability of success:
\begin{equation}
    P_s =  \left(\frac{1}{g_1g_2}\right)^{2n}\left(C_{N}^{2n}+\frac{1}{2}\sum_{l=1}^{2}\sum_{j=1}^{n}\binom{n}{j}C_{N}^{2(n-j)}\left[(1-\tau_{l}^2)\nu_{j}^{2}\right]^j\right).
\end{equation}

Once again, we can suppress losses in multiple ways. We can set $g_k=1/(\nu_k\tau_k)$ for each mode as in the original NLS protocol $(C_N=1)$, and each noise term will be weighted by a factor $\nu_k^{2j}$. Alternatively, noiseless attenuation can be omitted with all $\nu_k=1$, and noise terms will still be suppressed by a factor $C_N^{-2j}$ relative to the pure NOON state term. Fig.~\ref{fig:NOON_plots}(a) shows that when plotted as a function of gain, both fidelity expressions approach unity in the high-gain limit, and in fact are equivalent. However, noiseless attenuation is clearly detrimental since high-fidelity transmission will be performed with a much lower probability of success, as seen in Fig.~\ref{fig:NOON_plots}(b). A more explicit comparison to the similarly-behaved $W$ state can be seen in Appendix~\ref{app:all_states}.

\begin{figure}[t]
    \centering
    \includegraphics[width=\columnwidth]{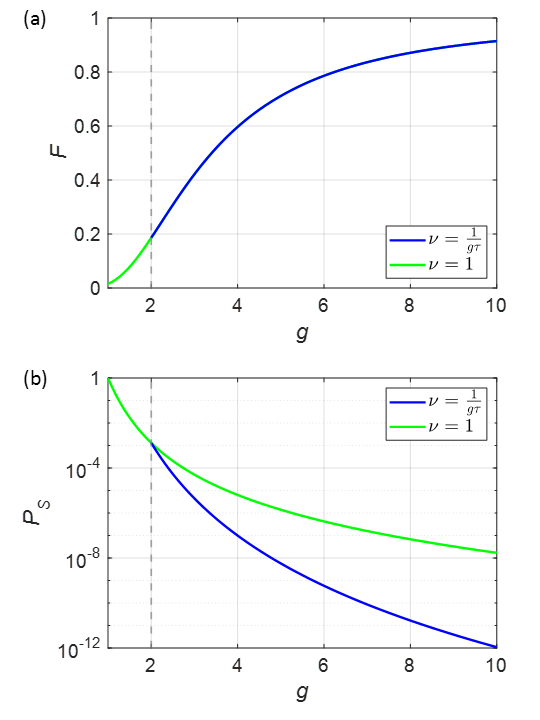}
\caption{(a) $N=3$ NOON state fidelity $F$ and (b) probability of success $P_s$ after NLS with $(\nu=\frac{1}{g\tau})$ and without $(\nu=1)$ noiseless attenuation, as a function of gain. Channel transmittance $\tau^2=0.25$ for all modes. Note that the regime $g<1/\tau=2$ is unphysical for the case that includes noiseless attenuation.}\label{fig:NOON_plots}
\end{figure}

\section{On the necessity of Noiseless Attenuation}
\label{sec:necessity}

As we have seen, the naive application of the original NLS protocol to each mode of an M-mode entangled state independently will result in the desired end state. However, for certain inputs like W and NOON states, as discussed in sections \ref{sec:W} and \ref{sec:noon}, noiseless attenuation is redundant and only reduces the overall probability of success. Here we provide a criteria for when states do or do not require the noiseless attenuation step of NLS.

The NLS protocol has two important and related features that we must ensure remain in the absence of noiseless attenuation: the asymptotic suppression of noise through the arbitrary increase of gain, and the preservation of the input state (achieved in the original protocol through the application of noiseless attenuation proportional to the gain).
The results of this section will show that the first property can be achieved through gain alone, regardless of whether noiseless attenuation is included, but that the latter property, the preservation of the input state, will often require it. 
For example, we find this is true of states that encode information in the vacuum, meaning one term in the superposition includes no photons. The inability to amplify this vacuum term means that noiseless attenuation is required to balance all coefficients and recover the initial state.

In Section \ref{sec:Micuda}, specifically Eq. \ref{eqn:NLStot}, we showed that for the original NLS scheme these two features are most apparent in the natural splitting of the output state into a ``signal'' corresponding to the initial input state we want to preserve and a ``noise'' term containing everything we want to suppress. We then chose a specific relationship for the values of attenuation and gain that simultaneously allows the gain to increase arbitrarily to suppress noise while maximizing signal fidelity for the most general set of input states.
However, NLS admits other configurations of attenuation and amplification. To see this, we can inspect the form of each operator $\hat{B}^{(k)}_{j_k}$ in the Fock basis directly (reprinted from Section \ref{sec:Micuda} for ease of reference): 
\begin{equation}\label{eqn:bk_ops2}
    \hat{B}^{(k)}_{j_k} = \sum_{m=j_k}^{N}\sqrt{\binom{m}{j_k}}(1-\tau_k^2)^{j_k/2}(\nu_k \tau_k g_k )^{m-j_k}\ket{m-j_k}\bra{m}.
\end{equation}
For fixed values of $\nu_k$ and $\tau_k$, the above operators scale with gain as $\hat{B}^{(k)}_{j_k}\sim g_{k}^{N-j_k}$. It follows that the corresponding noise terms $\tilde{\rho}_{j_1\dots j_N}$ which have some $j_k > 0$ (see Eq.~\ref{eqn:NLStot}) can be made arbitrarily small relative to the signal that has all $j_k=0$, disappearing in the limit $g_k\rightarrow\infty$ for all $k$. Previously, this scaling with gain was offset by noiseless attenuation due to the condition ${\nu_k \tau_k g_k = 1}$, and noise suppression was accomplished by tuning the additional attenuation factors present in Eq.~\ref{eqn:NLStot}.
However, the above argument makes it clear that asymptotic suppression of noise is possible through gain alone, whether or not noiseless attenuation is included.

Given that we have now shown the noise suppression is obtainable independent of the inclusion of noiseless attenuation, we turn our attention to the form of the signal term and determine when noiseless attenuation is necessary to preserve its form. 
We start with a completely general pure $M$-mode superposition state:
\begin{equation}\label{eqn:psi_gen}
    \ket{\psi}=\sum_{l=1}^{L} c_l \ket{n_1 \cdots n_M}^{(l)}
\end{equation}
where each of the $L$ terms in the superposition has a unique set of occupation numbers $\{n_k^{(l)}\}$ across $M$ modes.
Next we apply the NLS channel (with no assumptions about the values of $\nu_k$ or $g_k$ yet). Since we already discussed the noise terms, we isolate the signal term ($\tilde{\rho}_{0\cdots 0}$ in Eq. \ref{eqn:NLStot}) in which no photons were lost ${(j_k=0)}$.
This corresponds to a pure state given by the following expression:
\begin{equation}
\label{eqn:no_loss_state} 
\begin{aligned}
     \ket{\tilde{\psi}} &= \left(\prod_{k=1}^M{(\nu_kg_k\tau_k)^{\hat{n}}}\right) \ket{\psi}\\
     &\equiv \hat{X}\hat{Y} \ket{\psi}\\
\end{aligned}
\end{equation}
where, for convenience, we have also defined the \textit{amplifier}-only operator ${\hat{X}\equiv\prod_{k=1}^M{(g_k\tau_k)^{\hat{n}}}}$ and the \textit{attenuator}-only operator ${\hat{Y}\equiv\prod_{k=1}^M{(\nu_k)^{\hat{n}}}}$. Since our focus below will be on the role of attenuation, we have opted, without loss of generality, to include channel loss in $\hat{X}$.
Note that $\hat{X}$ and $\hat{Y}$ commute (or, more generally, all three of the component operators $\nu_k^{\hat{n}}$, $\tau_k^{\hat{n}}$, and $\g_k^{\hat{n}}$) because they are simultaneously diagonal in the Fock basis (but, importantly, do not necessarily have distinct eigenvalues \footnote{A well known theorem in linear algebra states that if two operators commute and each of them has distinct eigenvalues the operators then share the same eigenvectors. Since this is not the case we have (eigenvalues can repeat here) we cannot guarantee from this theorem alone that, for example, an eigenvector of $\hat{X}$ is an eigenvector of $\hat{Y}$}).
Hence, for the following analysis of the signal term alone the time ordering of the operators is unimportant.

For NLS to work as intended, we require $\ket{\tilde{\psi}}\propto\ket{\psi}$, meaning the input state $\ket{\psi}$ is an \textit{eigenstate} of the 
``lossless transmission operator'' $\hat{X}\hat{Y}=\prod_{k=1}^M{(\nu_kg_k\tau_k)^{\hat{n}}}$. 
As described in section \ref{sec:Micuda}, enforcing the condition ${\nu_k\tau_kg_k=1}$ sets this operator to the identity, and the asymptotic gain required to suppress noise is balanced with proportional noiseless attenuation.
Here, we instead consider the NLS protocol without noiseless attenuation $(\nu_k=1)$ and hence seek a solution that maintains amplifier gain as a free parameter (as it needs to be asymptotically large to remove the noise terms) without the ability to offset this with noiseless attenuation. To make this final point more explicit, we note that the original assumption of $\nu_{k}\tau_{k}g_{k}=1$ can be satisfied without noiseless attenuation by simply setting $\tau_{k}g_{k}=1$, which would still preserve the signal. However, since channel loss $\tau_{k}$ is fixed, we would no longer be able to increase gain arbitrarily to suppress noise.

Without noiseless attenuation, we have $\hat{Y}=\hat{I}$ and seek eigenstates of the \textit{amplifier}-only operator $\hat{X}\ket{\psi}\propto \ket{\psi}$. The input is an eigenstate of $\hat{X}$ if and only if each term in the superposition $l=1,\dots,L$ produces the same constant eigenvalue $\lambda$:
\begin{equation}
\prod_{k=1}^M(g_{k}\tau_{k})^{n_{k}^{(l)}} = \lambda
\label{eqn:eigenproblem1}
\end{equation}
This condition can be equivalently expressed as a set of $L$ linear equations:
\begin{equation}
\sum_{k=1}^M n_{k}^{(l)} x_k = \log\lambda 
\label{eqn:eigenproblem2}
\end{equation}
with $x_k\equiv \log(g_k\tau_k)$. A set of amplifiers $g_k$ that satisfy the above equations will preserve the form of the input state. Additionally, to suppress noise terms in the full output state, we also require that these solutions allow for arbitrarily high gain values. With some constant $C>1$, we can construct a new set of solutions $x'_k = C x_k$ (or expressed in terms of gain, $g'_k = g_k^{C}\tau_k^{C-1}$), and this will return a new eigenvalue $\lambda' = \lambda^C$. This constant $C$ can be made arbitrarily large, but we find that this corresponds to increasing gain $g'_k>g_k$ only when $g_k > 1/\tau_k$, meaning the amplifiers must be strong enough to restore amplitudes from photon loss in each channel.
Hence, we require $g_k > 1/\tau_k$ in all modes, i.e., positive solutions with all $x_k>0$~\footnote{Certain input states will admit solutions with non-positive $x_k$ in some modes ($g_k\leq 1/\tau$). In these cases, not all of the amplifier gains can be increased simultaneously, and thus it will not be possible to suppress all relevant noise terms. For example, states of the form $\ket{\psi}=\alpha\ket{m}_1\ket{m+1}_2+\beta\ket{n}_1\ket{n+1}_2$ will satisfy Eq.~\ref{eqn:eigenproblem1} with $g_1 = 1/(\lambda\tau_1)$ and $g_2 = \lambda/\tau_2$. As the eigenvalue $\lambda$ is increased, $g_2$ also increases, but $g_1$ must decrease, even tending toward attenuation $g_1<1$ to keep each mode balanced. Noise terms corresponding to photon loss in mode 1 would therefore not be suppressed relative to the signal.}.
If such solutions exist, then noiseless amplification alone will preserve the input state even for arbitrarily large gain, and thus noiseless attenuation is not required for the NLS protocol.

This system of equations (Eq.~\ref{eqn:eigenproblem2}) can also be expressed with a $L\times M$ matrix multiplying the vector of $x_k$. For the W and NOON states, the values of $n_k$ can be arranged as a diagonal matrix, and hence we can always find a positive solution. For these states, the NLS protocol can be completed without noiseless attenuation. This result also follows for other superpositions of a fixed number of photons $N$ distributed across $M$ modes (see Appendix~\ref{app:equiv}). In contrast, any state with a vacuum term will not have a solution, since one of the rows of this matrix will be the null vector $(0,...,0)$. Similarly, states such as the TMSV with more superposition terms than modes ($L>M$) will likely form an overdetermined set of equations with no solution. For these states, gain alone is not sufficient for NLS, and noiseless attenuation must be included.

In our analysis of W and NOON states, our examples of inputs which did not require noiseless attenuation for NLS, we also observed that the inclusion of noiseless attenuation (when set such that $\nu_{k}\tau_{k}g_{k}=1$) had no effect on the output fidelity, only reducing the probability of success. Here we will see that these properties go hand in hand: states which do not require noiseless attenuation are also those which are (roughly speaking) ``unaffected'' by particular configurations of noiseless attenuation, apart from normalization.

The reason for this is that these input states are not only eigenstates of the \textit{amplifier}-only operator $\hat{X}$, but are also eigenstates of the accompanying \textit{attenuator}-only operator $\hat{Y}$ for some set of $\nu_{k}$.
In particular, it is directly observable from the diagonal form of the lossless transmission operator that when $\nu_{k}\tau_{k}g_{k}=1$ the operator becomes $\hat{X}\hat{Y}=\hat{I}$. Hence $\hat{X}=\hat{Y}^{-1}$, and inverse operators share all the same eigenvectors with inverse eigenvalues.
We note a nuance in this argument: just because all input states are eigenvectors of the identity (when $\nu_{k}\tau_{k}g_{k}=1$ the full $\hat{X}\hat{Y}=\hat{I}$) does not mean that every state is an eigenvector of the individual $\hat{X}$ and $\hat{Y}$. For example, we showed in the main text that the GHZ state (again, as are all states) is an eigenvector of $\hat{X}\hat{Y}=\hat{I}$ despite not being an eigenvector of $\hat{X}$ or $\hat{Y}$ alone, meaning that the full NLS protocol will work on any state but the abbreviated amplifier-only ($\nu_k=1$) protocol will not.
Hence, the following statements are equivalent for a given input state $\ket{\psi}$: (i) NLS can be accomplished without noiseless attenuation (all $\nu_k=1$); (ii) $\ket{\psi}$ is an eigenstate of the \textit{amplifier}-only operator $\hat{X}$ for some set of gains $g_k > 1/\tau_k$; (iii) $\ket{\psi}$ is an eigenstate of the \textit{attenuator}-only operator $\hat{Y}$ for some set of $\nu_k < 1$. 
In App.~\ref{app:equiv} we include a formal proof that when all channels are equivalent these three statements are augmented with a fourth: (iv) $\ket{\psi}$ is a superposition of terms which have the same total photon number. The addition of this fourth statement naturally relates to the W and NOON states considered previously.

\section{Discussion and conclusions}
\label{sec:discussion}

Here, we have extended the analysis of Mi\v{c}uda et al. to various entangled quantum states, including bipartite squeezed vacuum (TMSV), NOON states, and M-partite W and GHZ states. Our results derive the specific NLS conditions required to recover the initial state form supplemented by reducible noise terms, analogous to the results of the original single-mode treatment.
In particular, we found that for both GHZ and TMSV states, NLS can be performed by applying the single-mode NLS protocol to each mode of the entangled state independently.  
However, in a departure from the single-mode results, we showed that noiseless attenuation is not essential for suppressing noise terms for both W and NOON states. This discovery underscores the nuanced role of noiseless attenuation when combined with noiseless amplification and offers a practical example of how GHZ and W state entanglement differ.

The role of noiseless attenuation and NLA in NLS is analogous to that of filters in Procrustean filtering in that noiseless attenuation and NLA each adjust the coefficients of terms in the superposition in order to recover the form of the initial state. Similarly, previous work \cite{ulanov2015undoing}, has shown that amplification alone can recover the form of the initial state with additional vacuum noise terms (though in this case, these noise terms cannot be reduced to zero, as in NLS). What we found is that when a single term in the superposition consists of only the vacuum, NLA alone cannot adjust the weight of this term, as there is nothing to amplify, and hence, in general, for these states (e.g., GHZ and TMSV) the additional degree of freedom to reduce the magnitude of coefficients with noiseless attenuation is required in order to recover the initial state form, but at the cost of a reduced probability of success.

\appendix

\section{Comparing the fidelity and probability of success for different states}\label{app:all_states}

Here we provide plots of the fidelity and probability of success for all four types of states discussed in the paper. These plots show the same results as Figs.~\ref{fig:W_plots}-\ref{fig:NOON_plots} but are arranged for an easier comparison of how the NLS protocol performs for each state. While it is clear higher fidelity comes at the cost of lower probability of success for a given input state, we also observe that this trend is inconsistent when comparing across different input states. For instance, the TMSV has both the highest output fidelity and the highest probability of success, while the NOON state has the lowest of both quantities.

\begin{figure}[t]
    \centering
    \includegraphics[width=\columnwidth]{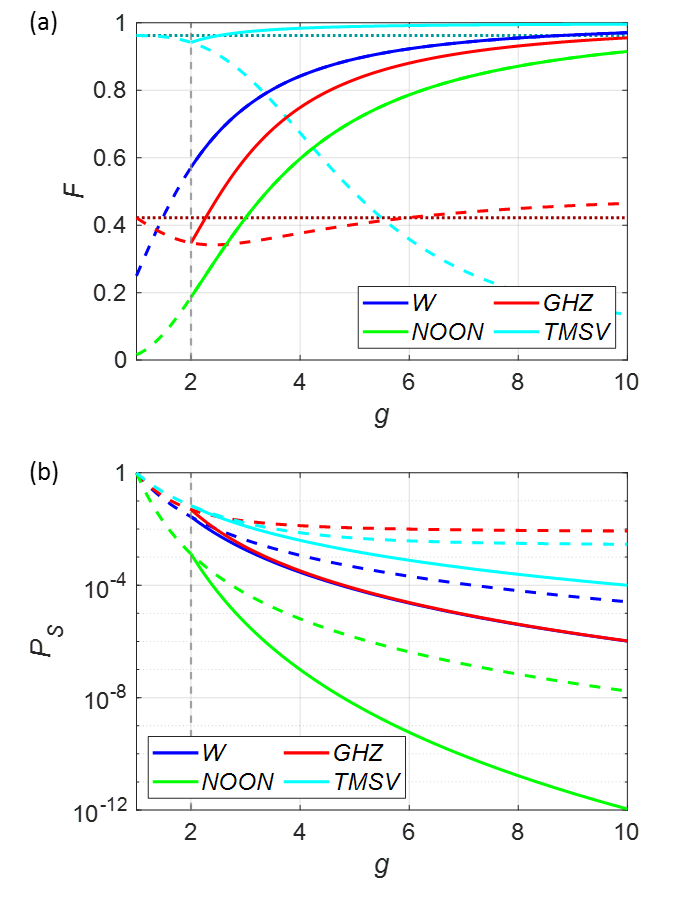}
\caption{Plots of (a) fidelity $F$ and (b) probability of success $P_s$ after NLS with ($\nu=\frac{1}{g\tau}$, solid) and without ($\nu=1$, dashed) noiseless attenuation as a function of gain for all four states discussed in the paper. The darker horizontal dotted lines show the initial fidelity of the GHZ state case (dark red) and TMSV case (dark cyan) without applying the NLS protocol. Channel transmittance $\tau^2=0.25$ for all modes. Note that the regime $g<1/\tau=2$ is unphysical for the case that includes noiseless attenuation.}\label{fig:ALL_plots}
\end{figure}

\section{Simplified forms of expressions}\label{app:general_expr}

We include in this appendix simplifications of the general fidelity and probability of success expressions presented in the main text. In particular, we consider the special case where all channels have the same parameters, including those for noiselesss attenuation and amplification. While the expressions here do not strictly include new information, the simplification can often be dramatic and provide a more clear and intuitive picture of the NLS protocol.

\subsection{W state}
Output fidelity for a $W$ state with $C_W = \nu_l \tau_l g_l$ for all modes $l=1,...,M$:

\begin{align}
    F=&\frac{\bra{W}\hat{\rho}\ket{W}}{\text{Tr}\{\hat{\rho}\}}=\frac{C_W^2}{C_W^2+\frac{1}{M}\sum_{l=1}^{M}\left(1-\tau_{l}^{2}\right)\nu_{l}^2} \\
    =&\frac{C_W^2}{C_W^2+(1-\tau^2)\nu^2} \quad\text{with all channels equal}. \nonumber
\end{align}

The corresponding probability of success:
\begin{align}
    P_s =&  \text{Tr}\{\hat{\rho}\}=\left(\prod_{k=1}^M g_k^{-2}\right)\left(C_W^2+\frac{1}{M}\sum_{l=1}^{M}\left(1-\tau_{l}^{2}\right)\nu_{l}^2\right) \\
    =& g^{-2M}\left[C_W^2+ (1-\tau^2)\nu^2\right] \quad\text{with all channels equal.} \nonumber
\end{align}

\begin{widetext}
\subsection{GHZ state}
Output fidelity for a GHZ state with $g_k = 1/(\nu_k \tau_k)$ for each mode $k=1,...,M$:
\begin{align}
    F=&\frac{\bra{\text{GHZ}}\hat{\rho}\ket{\text{GHZ}}}{\text{Tr}\{\hat{\rho}\}}=\frac{1+\frac{1}{4}\prod_{k=1}^{M}(1-\tau_k^2)\nu_k^2}{1+\frac{1}{2}\left(\sum_{\mathbf{j}\in B^M \setminus {1^M}}\left[\prod_{k=1}^{M}\left((1-\tau_{k}^{2})\nu_k^2\right)^{\neg j_{k}}\right]\right)} \\
    =&\frac{1+\frac{1}{4}[(1-\tau^2)\nu^2]^M}{1+\frac{1}{2}\left(\sum_{\mathbf{j}\in B^M \setminus {1^M}}\left[\prod_{k=1}^{M}\left((1-\tau^{2})\nu^2\right)^{\neg j_{k}}\right]\right)} \quad\text{with all channels equal} \\ \nonumber
    =&\frac{1+\frac{1}{4}[(1-\tau^2)\nu^2]^M}{1+\frac{1}{2}\left[\sum_{j_{tot}=1}^{M}\binom{M}{j_{tot}}\left(\nu^2(1-\tau^2)\right)^{j_{tot}}\right]} \\ \nonumber
    =&\frac{1+\frac{1}{4}[(1-\tau^2)\nu^2]^M}{1+\frac{1}{2}\left[\left(1+\nu^2(1-\tau^2)\right)^{M}-1\right]} \nonumber
\end{align}
where for a binary string $\bm{j}=(j_1,j_2,j_3,...,j_M)$, $j_{tot}=\sum_k{\neg j_k}$ and $j_{tot}$ corresponds to the total number of lost photons. When summing over all the strings $\bm{j}$, there are $\binom{M}{j_{tot}}$ different strings with the same number of lost photons $j_{tot}$, so we can write it as a sum over $j_{tot}$. Note, the sum starts at $j_{tot}=1$ instead of $0$ because string $1^{M}$ was previously factored out.

The corresponding probability of success:
\begin{align}
    P_s =&  \left(\prod_{k=1}^M g_k^{-2}\right)
    \left[1+\frac{1}{2}\left(\sum_{\mathbf{j}\in B^M \setminus {1^M}}\left[\prod_{k=1}^{M}\left((1-\tau_{k}^{2})\nu_k^2\right)^{\neg j_{k}}\right]\right)\right] \\
    =& g^{-2M}\left[1+\frac{1}{2}\left[\left(1+\nu^2(1-\tau^2)\right)^{M}-1\right]\right] \quad\text{with all channels equal.} \nonumber
\end{align}

Without noiseless attenuation $(\nu_k=1)$ and setting $C_{GHZ}=\tau_k g_k$ for $k=1,...M$, the fidelity is given by:
\begin{align}
    F=&\frac{\bra{\text{GHZ}}\hat{\rho}\ket{\text{GHZ}}}{\text{Tr}\{\hat{\rho}\}}=\frac{\frac{(1+C_{\text{GHZ}}^M)^2}{4}+\frac{1}{4}\prod_{k=1}^{M}(1-\tau_k^2)}{\frac{1+C_{\text{GHZ}}^{2M}}{2}+\frac{1}{2}\left(\sum_{\mathbf{j}\in B^M \setminus {1^M}}\left[\prod_{k=1}^{M}C_{\text{GHZ}}^{2j_k}\left(1-\tau_{k}^{2}\right)^{\neg j_{k}}\right]\right)} \\
    =&\frac{\frac{(1+C_{\text{GHZ}}^M)^2}{4}+\frac{1}{4}(1-\tau^2)^M}{\frac{1+C_{\text{GHZ}}^{2M}}{2}+\frac{1}{2}\left[\left(C_{\text{GHZ}}^{2}+1-\tau^2\right)^{M}-C_{\text{GHZ}}^{2M}\right]}  \quad\text{with all channels equal} \nonumber
\end{align}
with a probability of success:
\begin{align}
    P_s =&  \left(\prod_{k=1}^M g_k^{-2}\right)
    \left[\frac{1+C_{\text{GHZ}}^{2M}}{2}+\frac{1}{2}\left(\sum_{\mathbf{j}\in B^M \setminus {1^M}}\left[\prod_{k=1}^{M}C_{\text{GHZ}}^{j_k}\left(1-\tau_{k}^{2}\right)^{\neg j_{k}}\right]\right)\right] \\
    =& \frac{g^{-2M}}{2}\left[1+\left(C_{\text{GHZ}}^2+1-\tau^2\right)^{M}\right] \quad\text{with all channels equal.} \nonumber
\end{align}
\end{widetext}

\subsection{NOON state}
\label{app:NOON}
Output fidelity for a NOON state with $C_N = \nu_l \tau_l g_l$ for all modes $l=1,...,M$:
\begin{align}
    F=&\frac{C_{N}^{2n}}{C_{N}^{2n}+\frac{1}{2}\sum_{l=1}^{2}\sum_{j=1}^{n}\binom{n}{j}C_{N}^{2(n-j)}\left[(1-\tau_{l}^2)\nu_{j}^{2}\right]^j} \\
    =&\frac{C_{N}^{2n}}{[C_{N}^{2}+\nu^2(1-\tau^2)]^{n}} \quad\text{with all channels equal} \nonumber
\end{align}
where the sum in the denominator has been simplified with the binomial theorem. The probability of success is given by:
\begin{align}
    P_s =&  \left(\frac{1}{g_1g_2}\right)^{2n}\left(C_{N}^{2n}+\frac{1}{2}\sum_{l=1}^{2}\sum_{j=1}^{n}\binom{n}{j}C_{N}^{2(n-j)}\left[(1-\tau_{l}^2)\nu_{j}^{2}\right]^j\right) \\
    =& g^{-4n}\left[C_N^2+\nu^2(1-\tau^2)\right]^{n} \quad\text{with all channels equal.} \nonumber
\end{align}

\section{Eigenstate equivalence proof for equal channels}\label{app:equiv}

In section \ref{sec:necessity} we found three equivalent statements, reprinted here for ease of reference: (i) NLS can be accomplished without noiseless attenuation (all $\nu_k=1$); (ii) $\ket{\psi}$ is an eigenstate of the \textit{amplifier}-only operator $\hat{X}$ for some set of gains $g_k > 1/\tau_k$; (iii) $\ket{\psi}$ is an eigenstate of the \textit{attenuator}-only operator $\hat{Y}$ for some set of $\nu_k < 1$. 
Here we will focus on the special case where all channels are equal and show that this allows for a fourth statement to be include: (iv) $\ket{\psi}$ is a superposition of terms which have the same total photon number. The addition of this fourth statement provides an intuitive reason for why the W and NOON state do not require noiseless attenuation in the NLS protocol.

We begin by defining the simplified equal channel operators as
\begin{equation}
    \hat{X}_{g,\tau}=\left(g\tau\right)^{\hat{N}},\hat{Y}_{\nu}=\nu^{\hat{N}}
\end{equation}
where $\hat{N}=\sum_{k=1}^{M}\hat{n}_{k}$ is the total number of photons across all modes.

\begin{theorem}\label{theorem:G}
    A general $M$-mode state $\vert\Phi\rangle=\sum_{l=1}^{L} c_l \ket{n_1 \cdots n_M}^{(l)}$ is an eigenvector of $\hat{X}_{g\tau}$ if and only if each $\ket{n_1 \cdots n_M}^{(l)}$ is an eigenvector of $\hat{N}$ with the same eigenvalue $\hat{N}\ket{n_1 \cdots n_M}^{(l)}=\delta\ket{n_1 \cdots n_M}^{(l)}$ where $\delta$ is a constant.
\end{theorem}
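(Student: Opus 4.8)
The plan is to exploit the fact that $\hat{X}_{g,\tau}=(g\tau)^{\hat{N}}$ is diagonal in the Fock basis, so the entire statement reduces to the elementary question of when a linear combination of eigenvectors is again an eigenvector. First I would record the action on a single term: writing $N^{(l)}=\sum_{k=1}^{M}n_k^{(l)}$ for the total photon number of the $l$-th term, each $\ket{n_1\cdots n_M}^{(l)}$ satisfies $\hat{N}\ket{n_1\cdots n_M}^{(l)}=N^{(l)}\ket{n_1\cdots n_M}^{(l)}$, and hence $\hat{X}_{g,\tau}\ket{n_1\cdots n_M}^{(l)}=(g\tau)^{N^{(l)}}\ket{n_1\cdots n_M}^{(l)}$. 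This is the only structural input needed.

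The ($\Leftarrow$) direction is then immediate: if every term shares the common total photon number $\delta$, then $\hat{X}_{g,\tau}$ multiplies each term by the same scalar $(g\tau)^{\delta}$, so $\hat{X}_{g,\tau}\ket{\Phi}=(g\tau)^{\delta}\ket{\Phi}$ and $\ket{\Phi}$ is an eigenvector. For the ($\Rightarrow$) direction I would assume $\hat{X}_{g,\tau}\ket{\Phi}=\lambda\ket{\Phi}$, expand both sides in the Fock basis, and use that the distinct terms $\ket{n_1\cdots n_M}^{(l)}$ are mutually orthogonal (by hypothesis each has a unique occupation pattern, hence they are linearly independent). Equating coefficients gives $c_l(g\tau)^{N^{(l)}}=\lambda c_l$ for every $l$, so for each $l$ with $c_l\neq 0$ one has $(g\tau)^{N^{(l)}}=\lambda$; that is, all retained terms take the same value of $(g\tau)^{N^{(l)}}$.

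The crux — and the only place any care is required — is passing from ``$(g\tau)^{N^{(l)}}$ is constant'' to ``$N^{(l)}$ is constant.'' This step needs the map $n\mapsto(g\tau)^{n}$ to be injective on the nonnegative integers, which holds precisely when $g\tau\neq 1$. That is exactly the regime of interest here: noise suppression demands arbitrarily large gain $g>1/\tau$, so $g\tau>1$ and the exponential is strictly increasing, hence injective. Under this assumption $(g\tau)^{N^{(l)}}=\lambda$ forces $N^{(l)}=\log_{g\tau}\lambda\equiv\delta$ for all $l$, giving the claim with $\delta$ the common eigenvalue of $\hat{N}$. The hard part is really just recognizing and flagging the degenerate case $g\tau=1$: there $\hat{X}_{g,\tau}=\hat{I}$ and \emph{every} state is trivially an eigenvector irrespective of its photon content. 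This is precisely the ``repeated eigenvalue'' loophole noted in the footnote of Sec.~\ref{sec:necessity} — two commuting operators need not share eigenvectors when eigenvalues coincide — and it is exactly what the physical restriction $g>1/\tau$ excludes. I would therefore state the injectivity hypothesis explicitly so the equivalence is not vacuously broken by that single point.
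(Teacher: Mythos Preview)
Your proof is correct and follows essentially the same route as the paper's: act with $\hat{X}_{g,\tau}$ term by term in the Fock basis, then equate coefficients using the orthogonality of distinct occupation patterns. You are in fact more careful than the paper, which silently passes from ``$(g\tau)^{N_l}=\lambda$'' to ``$N_l$ constant'' without remarking that this requires $g\tau\neq 1$; your explicit identification of this injectivity hypothesis and the degenerate case $\hat{X}_{g,\tau}=\hat{I}$ is a genuine refinement.
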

\begin{proof}
First consider the if statement: assume $\hat{N}\ket{n_1 \cdots n_M}^{(l)}=\delta\ket{n_1 \cdots n_M}^{(l)}$ for all $l$ in $\vert\Phi\rangle$. Then 
\begin{equation}
    \hat{X}_{g,\tau}\vert\Phi\rangle=\sum_{l=1}^{L}c_{l}(g\tau)^{\delta}\ket{n_1 \cdots n_M}^{(l)}=(g\tau)^{\delta}\vert\Phi\rangle.
\end{equation}
We can then show this is only if by first assuming ${\hat{X}_{g,\tau}\vert\Phi\rangle=\lambda\vert\Phi\rangle}$ then 
\begin{equation}
    \hat{X}_{g,\tau}\vert\Phi\rangle=\sum_{l=1}^{L}(g\tau)^{N_{l}}c_{l}\ket{n_1 \cdots n_M}^{(l)}=\lambda\sum_{l=1}^{L}c_{l}\ket{n_1 \cdots n_M}^{(l)}
\end{equation}
where $\hat{N}\ket{n_1 \cdots n_M}^{(l)}=N_{l}\ket{n_1 \cdots n_M}^{(l)}$.
Therefore $(g\tau)^{N_{l}}c_{l}=\lambda c_{l}$ for all $l$. For all nonzero $c_{l}$ this then implies $(g\tau)^{N_{l}}=\lambda$ and hence $N_l$ is a constant for all $l$. 
\end{proof}
\begin{corollary}\label{corollary:AL}
A general $M$-mode state $\vert\Phi\rangle=\sum_{l=1}^{L} c_l \ket{n_1 \cdots n_M}^{(l)}$ is an eigenvector of $\hat{Y}_{\nu}$ if and only if each $\ket{n_1 \cdots n_M}^{(l)}$ is an eigenvector of $\hat{N}$ with the same eigenvalue $\hat{N}\ket{n_1 \cdots n_M}^{(l)}=\delta\ket{n_1 \cdots n_M}^{(l)}$ where $\delta$ is a constant.
\end{corollary}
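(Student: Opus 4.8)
The plan is to exploit the fact that $\hat{Y}_{\nu}=\nu^{\hat{N}}$ has exactly the same functional form as the operator $\hat{X}_{g,\tau}=(g\tau)^{\hat{N}}$ already treated in Theorem~\ref{theorem:G}: both are of the shape $a^{\hat{N}}$ for a single positive constant $a$, with $a=\nu$ here in place of $a=g\tau$. I would therefore argue that the entire proof of Theorem~\ref{theorem:G} carries over verbatim under the substitution $g\tau\to\nu$, so that no new ideas are needed; the corollary is essentially a relabeling of the theorem. For a self-contained presentation I would nonetheless reproduce the two short directions with the replacement made explicit.

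Concretely, the two directions would run as follows. For the ``if'' direction, I would assume $\hat{N}\ket{n_1\cdots n_M}^{(l)}=\delta\ket{n_1\cdots n_M}^{(l)}$ for every $l$ and compute
\begin{equation}
    \hat{Y}_{\nu}\ket{\Phi}=\sum_{l=1}^{L}c_{l}\nu^{\delta}\ket{n_1\cdots n_M}^{(l)}=\nu^{\delta}\ket{\Phi},
\end{equation}
exhibiting $\ket{\Phi}$ as an eigenvector with eigenvalue $\nu^{\delta}$. For the ``only if'' direction, I would assume $\hat{Y}_{\nu}\ket{\Phi}=\lambda\ket{\Phi}$, write $\hat{N}\ket{n_1\cdots n_M}^{(l)}=N_{l}\ket{n_1\cdots n_M}^{(l)}$, and match coefficients of the linearly independent Fock basis states to obtain $\nu^{N_{l}}c_{l}=\lambda c_{l}$, hence $\nu^{N_{l}}=\lambda$ for every $l$ with $c_{l}\neq 0$.

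The only step that is not purely formal — and the one place I would be careful — is concluding from $\nu^{N_{l}}=\lambda$ that the $N_{l}$ must all coincide. This relies on the injectivity of the map $N\mapsto\nu^{N}$, which holds precisely because $\nu<1$ (equivalently $\nu\neq 1$), so that $\nu^{N}$ is strictly decreasing in $N$. This is the exact analogue of the implicit requirement $g\tau\neq 1$ in Theorem~\ref{theorem:G}, and it is supplied here by the standing hypothesis $\nu<1$. Since this injectivity is the entire substance of the ``only if'' direction, I expect no genuine obstacle beyond recording it explicitly.
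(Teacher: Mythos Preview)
Your proposal is correct and mirrors the paper's own proof almost exactly: the paper simply notes that $\hat{Y}_{\nu}=\nu^{\hat{N}}$ has the same form as $\hat{X}_{g,\tau}=(g\tau)^{\hat{N}}$ under the replacement $g\tau\to\nu$ and declares that the proof of Theorem~\ref{theorem:G} carries over identically. Your explicit remark about the injectivity of $N\mapsto\nu^{N}$ requiring $\nu\neq 1$ is in fact more careful than the paper, which leaves this implicit.
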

\begin{proof}
The $\hat{Y}_{\nu}= \nu^{\hat{N}}$ operator has the same form as $\hat{X}_{g,\tau}=(g\tau)^{\hat{N}}$ with the replacement of $g\tau$ by $\nu$. The proof therefore follows identically to theorem \ref{theorem:G}.
\end{proof}
\begin{corollary}
    If and only if a state is an eigenvector of $\hat{X}_{g,\tau}$ it is also an eigenvector of $\hat{Y}_{\tau}$.
\end{corollary}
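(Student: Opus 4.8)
The plan is to obtain this final corollary immediately from Theorem~\ref{theorem:G} and Corollary~\ref{corollary:AL}, without touching the Fock basis again. The crucial observation is that both of those results characterize their respective eigenvector conditions by \emph{exactly the same} intrinsic property of $\ket{\Phi}$: that every term $\ket{n_1\cdots n_M}^{(l)}$ in the superposition is an eigenvector of the total number operator $\hat{N}$ sharing one common eigenvalue $\delta$. Because this ``fixed total photon number'' property makes no reference whatsoever to the numerical values of $g\tau$ or $\nu$, it furnishes a common pivot through which the two eigenvector conditions can be linked by transitivity.

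Concretely, I would argue the forward direction as follows. Suppose $\ket{\Phi}$ is an eigenvector of $\hat{X}_{g,\tau}=(g\tau)^{\hat{N}}$. By Theorem~\ref{theorem:G}, every term in $\ket{\Phi}$ then has the same total photon number $\delta$. But Corollary~\ref{corollary:AL} states that this identical condition is necessary and sufficient for $\ket{\Phi}$ to be an eigenvector of $\hat{Y}_{\nu}=\nu^{\hat{N}}$, with eigenvalue $\nu^{\delta}$. Hence $\ket{\Phi}$ is automatically an eigenvector of $\hat{Y}_{\nu}$. The reverse implication is verbatim the same argument with the roles of Theorem~\ref{theorem:G} and Corollary~\ref{corollary:AL} exchanged, so the two biconditionals compose into the asserted ``if and only if.'' Note that the shared eigenvalue $\delta$ carries across, but the two operators need not have equal eigenvalues $(g\tau)^{\delta}$ and $\nu^{\delta}$ — only the set of eigenvectors coincides, which is all the statement claims.

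The closest thing to an obstacle, and the one point I would flag explicitly, is the degenerate parameter regime. Theorem~\ref{theorem:G} and Corollary~\ref{corollary:AL} silently presume $g\tau\neq 1$ and $\nu\neq 1$: if $g\tau=1$ then $\hat{X}_{g,\tau}=\hat{I}$, every state is trivially an eigenvector, and the ``only if'' direction of Theorem~\ref{theorem:G} (which needs $(g\tau)^{N_l}=\lambda$ to force $N_l$ constant) no longer constrains $\ket{\Phi}$. I would therefore phrase the corollary as an equivalence of eigenvector \emph{sets} for nontrivial parameters, remarking that in the trivial case $g\tau=1$ or $\nu=1$ the relevant operator is the identity and the statement holds vacuously. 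Since the interesting NLS regime uses $g>1/\tau$ (hence $g\tau>1$) and $\nu<1$, this caveat does not affect any of the physical conclusions, and the corollary cleanly upgrades statements (ii) and (iii) of Section~\ref{sec:necessity} to their equal-channel form via statement (iv).
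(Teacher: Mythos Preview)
Your proposal is correct and follows essentially the same approach as the paper: invoke Theorem~\ref{theorem:G} to pass from the $\hat{X}_{g,\tau}$-eigenvector condition to the fixed-total-photon-number property, then Corollary~\ref{corollary:AL} to reach the $\hat{Y}_{\nu}$-eigenvector condition, with the reverse direction symmetric. Your explicit caveat about the degenerate regimes $g\tau=1$ and $\nu=1$ is a worthwhile addition that the paper leaves implicit.
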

\begin{proof}
If $\vert\Phi\rangle$ is an eigenvector of $\hat{X}_{g,\tau}$ then from theorem \ref{theorem:G} we know $\hat{N}\ket{n_1 \cdots n_M}^{(l)}=\delta\ket{n_1 \cdots n_M}^{(l)}$ and therefore from corollary \ref{corollary:AL} $\vert\Phi\rangle$ is also an eigenvector of $\hat{Y}_{\nu}$. 
The reverse direction follows equivalently, if $\vert\Phi\rangle$ is an eigenvector of $\hat{Y}_{\nu}$ it is also an eigenvector of $\hat{X}_{g,\tau}$.
\end{proof}
Hence, we see that when all channels are equal a state is an eigenvector of both $\hat{X}_{g,\tau}$ and $\hat{Y}_{\nu}$ if and only if it is an eigenvector of one~\footnote{We could have just as well broken $\hat{X}$ into separate operators for gain and loss and had analogous results for three operators without any additional physical insight.}. Further, as part of the proof, we saw the intuitive result that in this special case of equal channels all terms of the eigenvector must have equal total photon number immediately suggesting the W and NOON states -- as we showed in the earlier sections.


\begin{thebibliography}{45}%
    \makeatletter
    \providecommand \@ifxundefined [1]{%
     \@ifx{#1\undefined}
    }%
    \providecommand \@ifnum [1]{%
     \ifnum #1\expandafter \@firstoftwo
     \else \expandafter \@secondoftwo
     \fi
    }%
    \providecommand \@ifx [1]{%
     \ifx #1\expandafter \@firstoftwo
     \else \expandafter \@secondoftwo
     \fi
    }%
    \providecommand \natexlab [1]{#1}%
    \providecommand \enquote  [1]{``#1''}%
    \providecommand \bibnamefont  [1]{#1}%
    \providecommand \bibfnamefont [1]{#1}%
    \providecommand \citenamefont [1]{#1}%
    \providecommand \href@noop [0]{\@secondoftwo}%
    \providecommand \href [0]{\begingroup \@sanitize@url \@href}%
    \providecommand \@href[1]{\@@startlink{#1}\@@href}%
    \providecommand \@@href[1]{\endgroup#1\@@endlink}%
    \providecommand \@sanitize@url [0]{\catcode `\\12\catcode `\$12\catcode
      `\&12\catcode `\#12\catcode `\^12\catcode `\_12\catcode `\%12\relax}%
    \providecommand \@@startlink[1]{}%
    \providecommand \@@endlink[0]{}%
    \providecommand \url  [0]{\begingroup\@sanitize@url \@url }%
    \providecommand \@url [1]{\endgroup\@href {#1}{\urlprefix }}%
    \providecommand \urlprefix  [0]{URL }%
    \providecommand \Eprint [0]{\href }%
    \providecommand \doibase [0]{http://dx.doi.org/}%
    \providecommand \selectlanguage [0]{\@gobble}%
    \providecommand \bibinfo  [0]{\@secondoftwo}%
    \providecommand \bibfield  [0]{\@secondoftwo}%
    \providecommand \translation [1]{[#1]}%
    \providecommand \BibitemOpen [0]{}%
    \providecommand \bibitemStop [0]{}%
    \providecommand \bibitemNoStop [0]{.\EOS\space}%
    \providecommand \EOS [0]{\spacefactor3000\relax}%
    \providecommand \BibitemShut  [1]{\csname bibitem#1\endcsname}%
    \let\auto@bib@innerbib\@empty
    \bibitem [{\citenamefont {Mi\ifmmode~\check{c}\else \v{c}\fi{}uda}\ \emph
      {et~al.}(2012)\citenamefont {Mi\ifmmode~\check{c}\else \v{c}\fi{}uda},
      \citenamefont {Straka}, \citenamefont {Mikov\'a}, \citenamefont
      {Du\ifmmode~\check{s}\else \v{s}\fi{}ek}, \citenamefont {Cerf}, \citenamefont
      {Fiur\'a\ifmmode~\check{s}\else \v{s}\fi{}ek},\ and\ \citenamefont
      {Je\ifmmode~\check{z}\else \v{z}\fi{}ek}}]{mivcuda2012noiseless}%
      \BibitemOpen
      \bibfield  {author} {\bibinfo {author} {\bibfnamefont {M.}~\bibnamefont
      {Mi\ifmmode~\check{c}\else \v{c}\fi{}uda}}, \bibinfo {author} {\bibfnamefont
      {I.}~\bibnamefont {Straka}}, \bibinfo {author} {\bibfnamefont
      {M.}~\bibnamefont {Mikov\'a}}, \bibinfo {author} {\bibfnamefont
      {M.}~\bibnamefont {Du\ifmmode~\check{s}\else \v{s}\fi{}ek}}, \bibinfo
      {author} {\bibfnamefont {N.~J.}\ \bibnamefont {Cerf}}, \bibinfo {author}
      {\bibfnamefont {J.}~\bibnamefont {Fiur\'a\ifmmode~\check{s}\else
      \v{s}\fi{}ek}}, \ and\ \bibinfo {author} {\bibfnamefont {M.}~\bibnamefont
      {Je\ifmmode~\check{z}\else \v{z}\fi{}ek}},\ }\href {\doibase
      10.1103/PhysRevLett.109.180503} {\bibfield  {journal} {\bibinfo  {journal}
      {Phys. Rev. Lett.}\ }\textbf {\bibinfo {volume} {109}},\ \bibinfo {pages}
      {180503} (\bibinfo {year} {2012})}\BibitemShut {NoStop}%
    \bibitem [{\citenamefont {Cirac}\ \emph {et~al.}(1999)\citenamefont {Cirac},
      \citenamefont {Ekert}, \citenamefont {Huelga},\ and\ \citenamefont
      {Macchiavello}}]{cirac1999distributed}%
      \BibitemOpen
      \bibfield  {author} {\bibinfo {author} {\bibfnamefont {J.~I.}\ \bibnamefont
      {Cirac}}, \bibinfo {author} {\bibfnamefont {A.~K.}\ \bibnamefont {Ekert}},
      \bibinfo {author} {\bibfnamefont {S.~F.}\ \bibnamefont {Huelga}}, \ and\
      \bibinfo {author} {\bibfnamefont {C.}~\bibnamefont {Macchiavello}},\ }\href
      {\doibase 10.1103/PhysRevA.59.4249} {\bibfield  {journal} {\bibinfo
      {journal} {Phys. Rev. A}\ }\textbf {\bibinfo {volume} {59}},\ \bibinfo
      {pages} {4249} (\bibinfo {year} {1999})}\BibitemShut {NoStop}%
    \bibitem [{\citenamefont {Gottesman}\ \emph {et~al.}(2012)\citenamefont
      {Gottesman}, \citenamefont {Jennewein},\ and\ \citenamefont
      {Croke}}]{gottesman2012longer}%
      \BibitemOpen
      \bibfield  {author} {\bibinfo {author} {\bibfnamefont {D.}~\bibnamefont
      {Gottesman}}, \bibinfo {author} {\bibfnamefont {T.}~\bibnamefont
      {Jennewein}}, \ and\ \bibinfo {author} {\bibfnamefont {S.}~\bibnamefont
      {Croke}},\ }\href {\doibase 10.1103/PhysRevLett.109.070503} {\bibfield
      {journal} {\bibinfo  {journal} {Phys. Rev. Lett.}\ }\textbf {\bibinfo
      {volume} {109}},\ \bibinfo {pages} {070503} (\bibinfo {year}
      {2012})}\BibitemShut {NoStop}%
    \bibitem [{\citenamefont {K\'om\'ar}\ \emph {et~al.}(2016)\citenamefont
      {K\'om\'ar}, \citenamefont {Topcu}, \citenamefont {Kessler}, \citenamefont
      {Derevianko}, \citenamefont {Vuleti\ifmmode~\acute{c}\else \'{c}\fi{}},
      \citenamefont {Ye},\ and\ \citenamefont {Lukin}}]{komar2016quantum}%
      \BibitemOpen
      \bibfield  {author} {\bibinfo {author} {\bibfnamefont {P.}~\bibnamefont
      {K\'om\'ar}}, \bibinfo {author} {\bibfnamefont {T.}~\bibnamefont {Topcu}},
      \bibinfo {author} {\bibfnamefont {E.~M.}\ \bibnamefont {Kessler}}, \bibinfo
      {author} {\bibfnamefont {A.}~\bibnamefont {Derevianko}}, \bibinfo {author}
      {\bibfnamefont {V.}~\bibnamefont {Vuleti\ifmmode~\acute{c}\else \'{c}\fi{}}},
      \bibinfo {author} {\bibfnamefont {J.}~\bibnamefont {Ye}}, \ and\ \bibinfo
      {author} {\bibfnamefont {M.~D.}\ \bibnamefont {Lukin}},\ }\href {\doibase
      10.1103/PhysRevLett.117.060506} {\bibfield  {journal} {\bibinfo  {journal}
      {Phys. Rev. Lett.}\ }\textbf {\bibinfo {volume} {117}},\ \bibinfo {pages}
      {060506} (\bibinfo {year} {2016})}\BibitemShut {NoStop}%
    \bibitem [{\citenamefont {Pandey}\ \emph {et~al.}(2013)\citenamefont {Pandey},
      \citenamefont {Jiang}, \citenamefont {Combes},\ and\ \citenamefont
      {Caves}}]{pandey2013quantum}%
      \BibitemOpen
      \bibfield  {author} {\bibinfo {author} {\bibfnamefont {S.}~\bibnamefont
      {Pandey}}, \bibinfo {author} {\bibfnamefont {Z.}~\bibnamefont {Jiang}},
      \bibinfo {author} {\bibfnamefont {J.}~\bibnamefont {Combes}}, \ and\ \bibinfo
      {author} {\bibfnamefont {C.~M.}\ \bibnamefont {Caves}},\ }\href {\doibase
      10.1103/PhysRevA.88.033852} {\bibfield  {journal} {\bibinfo  {journal} {Phys.
      Rev. A}\ }\textbf {\bibinfo {volume} {88}},\ \bibinfo {pages} {033852}
      (\bibinfo {year} {2013})}\BibitemShut {NoStop}%
    \bibitem [{\citenamefont {Ralph}\ and\ \citenamefont
      {Lund}(2009)}]{ralph2009nondeterministic}%
      \BibitemOpen
      \bibfield  {author} {\bibinfo {author} {\bibfnamefont {T.~C.}\ \bibnamefont
      {Ralph}}\ and\ \bibinfo {author} {\bibfnamefont {A.~P.}\ \bibnamefont
      {Lund}},\ }\href {\doibase 10.1063/1.3131295} {\bibfield  {journal} {\bibinfo
       {journal} {AIP Conference Proceedings}\ }\textbf {\bibinfo {volume}
      {1110}},\ \bibinfo {pages} {155} (\bibinfo {year} {2009})}\BibitemShut
      {NoStop}%
    \bibitem [{\citenamefont {Xiang}\ \emph {et~al.}(2010)\citenamefont {Xiang},
      \citenamefont {Ralph}, \citenamefont {Lund}, \citenamefont {Walk},\ and\
      \citenamefont {Pryde}}]{xiang2010heralded}%
      \BibitemOpen
      \bibfield  {author} {\bibinfo {author} {\bibfnamefont {G.~Y.}\ \bibnamefont
      {Xiang}}, \bibinfo {author} {\bibfnamefont {T.~C.}\ \bibnamefont {Ralph}},
      \bibinfo {author} {\bibfnamefont {A.~P.}\ \bibnamefont {Lund}}, \bibinfo
      {author} {\bibfnamefont {N.}~\bibnamefont {Walk}}, \ and\ \bibinfo {author}
      {\bibfnamefont {G.~J.}\ \bibnamefont {Pryde}},\ }\href {\doibase
      10.1038/nphoton.2010.35} {\bibfield  {journal} {\bibinfo  {journal} {Nat.
      Photonics}\ }\textbf {\bibinfo {volume} {4}},\ \bibinfo {pages} {316}
      (\bibinfo {year} {2010})}\BibitemShut {NoStop}%
    \bibitem [{\citenamefont {Osorio}\ \emph {et~al.}(2012)\citenamefont {Osorio},
      \citenamefont {Bruno}, \citenamefont {Sangouard}, \citenamefont {Zbinden},
      \citenamefont {Gisin},\ and\ \citenamefont {Thew}}]{osorio2012heralded}%
      \BibitemOpen
      \bibfield  {author} {\bibinfo {author} {\bibfnamefont {C.~I.}\ \bibnamefont
      {Osorio}}, \bibinfo {author} {\bibfnamefont {N.}~\bibnamefont {Bruno}},
      \bibinfo {author} {\bibfnamefont {N.}~\bibnamefont {Sangouard}}, \bibinfo
      {author} {\bibfnamefont {H.}~\bibnamefont {Zbinden}}, \bibinfo {author}
      {\bibfnamefont {N.}~\bibnamefont {Gisin}}, \ and\ \bibinfo {author}
      {\bibfnamefont {R.~T.}\ \bibnamefont {Thew}},\ }\href {\doibase
      10.1103/PhysRevA.86.023815} {\bibfield  {journal} {\bibinfo  {journal} {Phys.
      Rev. A}\ }\textbf {\bibinfo {volume} {86}},\ \bibinfo {pages} {023815}
      (\bibinfo {year} {2012})}\BibitemShut {NoStop}%
    \bibitem [{\citenamefont {Ferreyrol}\ \emph {et~al.}(2010)\citenamefont
      {Ferreyrol}, \citenamefont {Barbieri}, \citenamefont {Blandino},
      \citenamefont {Fossier}, \citenamefont {Tualle-Brouri},\ and\ \citenamefont
      {Grangier}}]{ferreyrol2010implementation}%
      \BibitemOpen
      \bibfield  {author} {\bibinfo {author} {\bibfnamefont {F.}~\bibnamefont
      {Ferreyrol}}, \bibinfo {author} {\bibfnamefont {M.}~\bibnamefont {Barbieri}},
      \bibinfo {author} {\bibfnamefont {R.}~\bibnamefont {Blandino}}, \bibinfo
      {author} {\bibfnamefont {S.}~\bibnamefont {Fossier}}, \bibinfo {author}
      {\bibfnamefont {R.}~\bibnamefont {Tualle-Brouri}}, \ and\ \bibinfo {author}
      {\bibfnamefont {P.}~\bibnamefont {Grangier}},\ }\href {\doibase
      10.1103/PhysRevLett.104.123603} {\bibfield  {journal} {\bibinfo  {journal}
      {Phys. Rev. Lett.}\ }\textbf {\bibinfo {volume} {104}},\ \bibinfo {pages}
      {123603} (\bibinfo {year} {2010})}\BibitemShut {NoStop}%
    \bibitem [{\citenamefont {Usuga}\ \emph {et~al.}(2010)\citenamefont {Usuga},
      \citenamefont {M{\"u}ller}, \citenamefont {Wittmann}, \citenamefont {Marek},
      \citenamefont {Filip}, \citenamefont {Marquardt}, \citenamefont {Leuchs},\
      and\ \citenamefont {Andersen}}]{usuga2010noise}%
      \BibitemOpen
      \bibfield  {author} {\bibinfo {author} {\bibfnamefont {M.~A.}\ \bibnamefont
      {Usuga}}, \bibinfo {author} {\bibfnamefont {C.~R.}\ \bibnamefont
      {M{\"u}ller}}, \bibinfo {author} {\bibfnamefont {C.}~\bibnamefont
      {Wittmann}}, \bibinfo {author} {\bibfnamefont {P.}~\bibnamefont {Marek}},
      \bibinfo {author} {\bibfnamefont {R.}~\bibnamefont {Filip}}, \bibinfo
      {author} {\bibfnamefont {C.}~\bibnamefont {Marquardt}}, \bibinfo {author}
      {\bibfnamefont {G.}~\bibnamefont {Leuchs}}, \ and\ \bibinfo {author}
      {\bibfnamefont {U.~L.}\ \bibnamefont {Andersen}},\ }\href {\doibase
      10.1038/nphys1743} {\bibfield  {journal} {\bibinfo  {journal} {Nat. Phys.}\
      }\textbf {\bibinfo {volume} {6}},\ \bibinfo {pages} {767} (\bibinfo {year}
      {2010})}\BibitemShut {NoStop}%
    \bibitem [{\citenamefont {Zavatta}\ \emph {et~al.}(2011)\citenamefont
      {Zavatta}, \citenamefont {Fiur{\'a}{\v{s}}ek},\ and\ \citenamefont
      {Bellini}}]{zavatta2011high}%
      \BibitemOpen
      \bibfield  {author} {\bibinfo {author} {\bibfnamefont {A.}~\bibnamefont
      {Zavatta}}, \bibinfo {author} {\bibfnamefont {J.}~\bibnamefont
      {Fiur{\'a}{\v{s}}ek}}, \ and\ \bibinfo {author} {\bibfnamefont
      {M.}~\bibnamefont {Bellini}},\ }\href {\doibase 10.1038/nphoton.2010.260}
      {\bibfield  {journal} {\bibinfo  {journal} {Nat. Photonics}\ }\textbf
      {\bibinfo {volume} {5}},\ \bibinfo {pages} {52} (\bibinfo {year}
      {2011})}\BibitemShut {NoStop}%
    \bibitem [{\citenamefont {Donaldson}\ \emph {et~al.}(2015)\citenamefont
      {Donaldson}, \citenamefont {Collins}, \citenamefont {Eleftheriadou},
      \citenamefont {Barnett}, \citenamefont {Jeffers},\ and\ \citenamefont
      {Buller}}]{donaldson2015experimental}%
      \BibitemOpen
      \bibfield  {author} {\bibinfo {author} {\bibfnamefont {R.~J.}\ \bibnamefont
      {Donaldson}}, \bibinfo {author} {\bibfnamefont {R.~J.}\ \bibnamefont
      {Collins}}, \bibinfo {author} {\bibfnamefont {E.}~\bibnamefont
      {Eleftheriadou}}, \bibinfo {author} {\bibfnamefont {S.~M.}\ \bibnamefont
      {Barnett}}, \bibinfo {author} {\bibfnamefont {J.}~\bibnamefont {Jeffers}}, \
      and\ \bibinfo {author} {\bibfnamefont {G.~S.}\ \bibnamefont {Buller}},\
      }\href {\doibase 10.1103/PhysRevLett.114.120505} {\bibfield  {journal}
      {\bibinfo  {journal} {Phys. Rev. Lett.}\ }\textbf {\bibinfo {volume} {114}},\
      \bibinfo {pages} {120505} (\bibinfo {year} {2015})}\BibitemShut {NoStop}%
    \bibitem [{\citenamefont {Haw}\ \emph {et~al.}(2016)\citenamefont {Haw},
      \citenamefont {Zhao}, \citenamefont {Dias}, \citenamefont {Assad},
      \citenamefont {Bradshaw}, \citenamefont {Blandino}, \citenamefont {Symul},
      \citenamefont {Ralph},\ and\ \citenamefont {Lam}}]{haw2016surpassing}%
      \BibitemOpen
      \bibfield  {author} {\bibinfo {author} {\bibfnamefont {J.~Y.}\ \bibnamefont
      {Haw}}, \bibinfo {author} {\bibfnamefont {J.}~\bibnamefont {Zhao}}, \bibinfo
      {author} {\bibfnamefont {J.}~\bibnamefont {Dias}}, \bibinfo {author}
      {\bibfnamefont {S.~M.}\ \bibnamefont {Assad}}, \bibinfo {author}
      {\bibfnamefont {M.}~\bibnamefont {Bradshaw}}, \bibinfo {author}
      {\bibfnamefont {R.}~\bibnamefont {Blandino}}, \bibinfo {author}
      {\bibfnamefont {T.}~\bibnamefont {Symul}}, \bibinfo {author} {\bibfnamefont
      {T.~C.}\ \bibnamefont {Ralph}}, \ and\ \bibinfo {author} {\bibfnamefont
      {P.~K.}\ \bibnamefont {Lam}},\ }\href {\doibase 10.1038/ncomms13222}
      {\bibfield  {journal} {\bibinfo  {journal} {Nat. Commun.}\ }\textbf {\bibinfo
      {volume} {7}},\ \bibinfo {pages} {13222} (\bibinfo {year}
      {2016})}\BibitemShut {NoStop}%
    \bibitem [{\citenamefont {Ulanov}\ \emph {et~al.}(2015)\citenamefont {Ulanov},
      \citenamefont {Fedorov}, \citenamefont {Pushkina}, \citenamefont {Kurochkin},
      \citenamefont {Ralph},\ and\ \citenamefont {Lvovsky}}]{ulanov2015undoing}%
      \BibitemOpen
      \bibfield  {author} {\bibinfo {author} {\bibfnamefont {A.~E.}\ \bibnamefont
      {Ulanov}}, \bibinfo {author} {\bibfnamefont {I.~A.}\ \bibnamefont {Fedorov}},
      \bibinfo {author} {\bibfnamefont {A.~A.}\ \bibnamefont {Pushkina}}, \bibinfo
      {author} {\bibfnamefont {Y.~V.}\ \bibnamefont {Kurochkin}}, \bibinfo {author}
      {\bibfnamefont {T.~C.}\ \bibnamefont {Ralph}}, \ and\ \bibinfo {author}
      {\bibfnamefont {A.~I.}\ \bibnamefont {Lvovsky}},\ }\href {\doibase
      10.1038/nphoton.2015.195} {\bibfield  {journal} {\bibinfo  {journal} {Nat.
      Photonics}\ }\textbf {\bibinfo {volume} {9}},\ \bibinfo {pages} {764}
      (\bibinfo {year} {2015})}\BibitemShut {NoStop}%
    \bibitem [{\citenamefont {Dias}\ \emph {et~al.}(2020)\citenamefont {Dias},
      \citenamefont {Winnel}, \citenamefont {Hosseinidehaj},\ and\ \citenamefont
      {Ralph}}]{dias2020quantum}%
      \BibitemOpen
      \bibfield  {author} {\bibinfo {author} {\bibfnamefont {J.}~\bibnamefont
      {Dias}}, \bibinfo {author} {\bibfnamefont {M.~S.}\ \bibnamefont {Winnel}},
      \bibinfo {author} {\bibfnamefont {N.}~\bibnamefont {Hosseinidehaj}}, \ and\
      \bibinfo {author} {\bibfnamefont {T.~C.}\ \bibnamefont {Ralph}},\ }\href
      {\doibase 10.1103/PhysRevA.102.052425} {\bibfield  {journal} {\bibinfo
      {journal} {Phys. Rev. A}\ }\textbf {\bibinfo {volume} {102}},\ \bibinfo
      {pages} {052425} (\bibinfo {year} {2020})}\BibitemShut {NoStop}%
    \bibitem [{\citenamefont {Karsa}\ \emph {et~al.}(2022)\citenamefont {Karsa},
      \citenamefont {Ghalaii},\ and\ \citenamefont
      {Pirandola}}]{karsa2022noiseless}%
      \BibitemOpen
      \bibfield  {author} {\bibinfo {author} {\bibfnamefont {A.}~\bibnamefont
      {Karsa}}, \bibinfo {author} {\bibfnamefont {M.}~\bibnamefont {Ghalaii}}, \
      and\ \bibinfo {author} {\bibfnamefont {S.}~\bibnamefont {Pirandola}},\ }\href
      {\doibase 10.1088/2058-9565/ac781b} {\bibfield  {journal} {\bibinfo
      {journal} {Quantum Sci. Technol.}\ }\textbf {\bibinfo {volume} {7}},\
      \bibinfo {pages} {035026} (\bibinfo {year} {2022})}\BibitemShut {NoStop}%
    \bibitem [{\citenamefont {Kocsis}\ \emph {et~al.}(2013)\citenamefont {Kocsis},
      \citenamefont {Xiang}, \citenamefont {Ralph},\ and\ \citenamefont
      {Pryde}}]{kocsis2013heralded}%
      \BibitemOpen
      \bibfield  {author} {\bibinfo {author} {\bibfnamefont {S.}~\bibnamefont
      {Kocsis}}, \bibinfo {author} {\bibfnamefont {G.~Y.}\ \bibnamefont {Xiang}},
      \bibinfo {author} {\bibfnamefont {T.~C.}\ \bibnamefont {Ralph}}, \ and\
      \bibinfo {author} {\bibfnamefont {G.~J.}\ \bibnamefont {Pryde}},\ }\href
      {\doibase 10.1038/nphys2469} {\bibfield  {journal} {\bibinfo  {journal} {Nat.
      Phys.}\ }\textbf {\bibinfo {volume} {9}},\ \bibinfo {pages} {23} (\bibinfo
      {year} {2013})}\BibitemShut {NoStop}%
    \bibitem [{\citenamefont {Monteiro}\ \emph {et~al.}(2017)\citenamefont
      {Monteiro}, \citenamefont {Verbanis}, \citenamefont {Vivoli}, \citenamefont
      {Martin}, \citenamefont {Gisin}, \citenamefont {Zbinden},\ and\ \citenamefont
      {Thew}}]{monteiro2017heralded}%
      \BibitemOpen
      \bibfield  {author} {\bibinfo {author} {\bibfnamefont {F.}~\bibnamefont
      {Monteiro}}, \bibinfo {author} {\bibfnamefont {E.}~\bibnamefont {Verbanis}},
      \bibinfo {author} {\bibfnamefont {V.~C.}\ \bibnamefont {Vivoli}}, \bibinfo
      {author} {\bibfnamefont {A.}~\bibnamefont {Martin}}, \bibinfo {author}
      {\bibfnamefont {N.}~\bibnamefont {Gisin}}, \bibinfo {author} {\bibfnamefont
      {H.}~\bibnamefont {Zbinden}}, \ and\ \bibinfo {author} {\bibfnamefont
      {R.~T.}\ \bibnamefont {Thew}},\ }\href {\doibase 10.1088/2058-9565/aa70ad}
      {\bibfield  {journal} {\bibinfo  {journal} {Quantum Sci. Technol.}\ }\textbf
      {\bibinfo {volume} {2}},\ \bibinfo {pages} {024008} (\bibinfo {year}
      {2017})}\BibitemShut {NoStop}%
    \bibitem [{\citenamefont {Bruno}\ \emph {et~al.}(2016)\citenamefont {Bruno},
      \citenamefont {Pini}, \citenamefont {Martin}, \citenamefont {Verma},
      \citenamefont {Nam}, \citenamefont {Mirin}, \citenamefont {Lita},
      \citenamefont {Marsili}, \citenamefont {Korzh}, \citenamefont
      {Bussi\`{e}res}, \citenamefont {Sangouard}, \citenamefont {Zbinden},
      \citenamefont {Gisin},\ and\ \citenamefont {Thew}}]{bruno2016heralded}%
      \BibitemOpen
      \bibfield  {author} {\bibinfo {author} {\bibfnamefont {N.}~\bibnamefont
      {Bruno}}, \bibinfo {author} {\bibfnamefont {V.}~\bibnamefont {Pini}},
      \bibinfo {author} {\bibfnamefont {A.}~\bibnamefont {Martin}}, \bibinfo
      {author} {\bibfnamefont {V.~B.}\ \bibnamefont {Verma}}, \bibinfo {author}
      {\bibfnamefont {S.~W.}\ \bibnamefont {Nam}}, \bibinfo {author} {\bibfnamefont
      {R.}~\bibnamefont {Mirin}}, \bibinfo {author} {\bibfnamefont
      {A.}~\bibnamefont {Lita}}, \bibinfo {author} {\bibfnamefont {F.}~\bibnamefont
      {Marsili}}, \bibinfo {author} {\bibfnamefont {B.}~\bibnamefont {Korzh}},
      \bibinfo {author} {\bibfnamefont {F.}~\bibnamefont {Bussi\`{e}res}}, \bibinfo
      {author} {\bibfnamefont {N.}~\bibnamefont {Sangouard}}, \bibinfo {author}
      {\bibfnamefont {H.}~\bibnamefont {Zbinden}}, \bibinfo {author} {\bibfnamefont
      {N.}~\bibnamefont {Gisin}}, \ and\ \bibinfo {author} {\bibfnamefont
      {R.}~\bibnamefont {Thew}},\ }\href {\doibase 10.1364/OE.24.000125} {\bibfield
       {journal} {\bibinfo  {journal} {Opt. Express}\ }\textbf {\bibinfo {volume}
      {24}},\ \bibinfo {pages} {125} (\bibinfo {year} {2016})}\BibitemShut
      {NoStop}%
    \bibitem [{\citenamefont {Mauron}\ and\ \citenamefont
      {Ralph}(2022)}]{mauron2022comparison}%
      \BibitemOpen
      \bibfield  {author} {\bibinfo {author} {\bibfnamefont {C.}~\bibnamefont
      {Mauron}}\ and\ \bibinfo {author} {\bibfnamefont {T.~C.}\ \bibnamefont
      {Ralph}},\ }\href {\doibase 10.1103/PhysRevA.106.062603} {\bibfield
      {journal} {\bibinfo  {journal} {Phys. Rev. A}\ }\textbf {\bibinfo {volume}
      {106}},\ \bibinfo {pages} {062603} (\bibinfo {year} {2022})}\BibitemShut
      {NoStop}%
    \bibitem [{\citenamefont {Liu}\ \emph {et~al.}(2022)\citenamefont {Liu},
      \citenamefont {Zheng}, \citenamefont {Kang}, \citenamefont {Han},
      \citenamefont {Wang}, \citenamefont {Zhang}, \citenamefont {Su},\ and\
      \citenamefont {Peng}}]{liu2022distillation}%
      \BibitemOpen
      \bibfield  {author} {\bibinfo {author} {\bibfnamefont {Y.}~\bibnamefont
      {Liu}}, \bibinfo {author} {\bibfnamefont {K.}~\bibnamefont {Zheng}}, \bibinfo
      {author} {\bibfnamefont {H.}~\bibnamefont {Kang}}, \bibinfo {author}
      {\bibfnamefont {D.}~\bibnamefont {Han}}, \bibinfo {author} {\bibfnamefont
      {M.}~\bibnamefont {Wang}}, \bibinfo {author} {\bibfnamefont {L.}~\bibnamefont
      {Zhang}}, \bibinfo {author} {\bibfnamefont {X.}~\bibnamefont {Su}}, \ and\
      \bibinfo {author} {\bibfnamefont {K.}~\bibnamefont {Peng}},\ }\href {\doibase
      10.1038/s41534-022-00549-9} {\bibfield  {journal} {\bibinfo  {journal} {npj
      Quantum Inf.}\ }\textbf {\bibinfo {volume} {8}},\ \bibinfo {pages} {38}
      (\bibinfo {year} {2022})}\BibitemShut {NoStop}%
    \bibitem [{\citenamefont {Slussarenko}\ \emph {et~al.}(2022)\citenamefont
      {Slussarenko}, \citenamefont {Weston}, \citenamefont {Shalm}, \citenamefont
      {Verma}, \citenamefont {Nam}, \citenamefont {Kocsis}, \citenamefont {Ralph},\
      and\ \citenamefont {Pryde}}]{slussarenko2022quantum}%
      \BibitemOpen
      \bibfield  {author} {\bibinfo {author} {\bibfnamefont {S.}~\bibnamefont
      {Slussarenko}}, \bibinfo {author} {\bibfnamefont {M.~M.}\ \bibnamefont
      {Weston}}, \bibinfo {author} {\bibfnamefont {L.~K.}\ \bibnamefont {Shalm}},
      \bibinfo {author} {\bibfnamefont {V.~B.}\ \bibnamefont {Verma}}, \bibinfo
      {author} {\bibfnamefont {S.-W.}\ \bibnamefont {Nam}}, \bibinfo {author}
      {\bibfnamefont {S.}~\bibnamefont {Kocsis}}, \bibinfo {author} {\bibfnamefont
      {T.~C.}\ \bibnamefont {Ralph}}, \ and\ \bibinfo {author} {\bibfnamefont
      {G.~J.}\ \bibnamefont {Pryde}},\ }\href {\doibase 10.1038/s41467-022-29376-4}
      {\bibfield  {journal} {\bibinfo  {journal} {Nat. Commun.}\ }\textbf {\bibinfo
      {volume} {13}},\ \bibinfo {pages} {1832} (\bibinfo {year}
      {2022})}\BibitemShut {NoStop}%
    \bibitem [{\citenamefont {Seshadreesan}\ \emph {et~al.}(2019)\citenamefont
      {Seshadreesan}, \citenamefont {Krovi},\ and\ \citenamefont
      {Guha}}]{seshadreesan2019continuous}%
      \BibitemOpen
      \bibfield  {author} {\bibinfo {author} {\bibfnamefont {K.~P.}\ \bibnamefont
      {Seshadreesan}}, \bibinfo {author} {\bibfnamefont {H.}~\bibnamefont {Krovi}},
      \ and\ \bibinfo {author} {\bibfnamefont {S.}~\bibnamefont {Guha}},\ }\href
      {\doibase 10.1103/PhysRevA.100.022315} {\bibfield  {journal} {\bibinfo
      {journal} {Phys. Rev. A}\ }\textbf {\bibinfo {volume} {100}},\ \bibinfo
      {pages} {022315} (\bibinfo {year} {2019})}\BibitemShut {NoStop}%
    \bibitem [{\citenamefont {Gagatsos}\ \emph {et~al.}(2014)\citenamefont
      {Gagatsos}, \citenamefont {Fiur\'a\ifmmode~\check{s}\else \v{s}\fi{}ek},
      \citenamefont {Zavatta}, \citenamefont {Bellini},\ and\ \citenamefont
      {Cerf}}]{gagatsos2014heralded}%
      \BibitemOpen
      \bibfield  {author} {\bibinfo {author} {\bibfnamefont {C.~N.}\ \bibnamefont
      {Gagatsos}}, \bibinfo {author} {\bibfnamefont {J.}~\bibnamefont
      {Fiur\'a\ifmmode~\check{s}\else \v{s}\fi{}ek}}, \bibinfo {author}
      {\bibfnamefont {A.}~\bibnamefont {Zavatta}}, \bibinfo {author} {\bibfnamefont
      {M.}~\bibnamefont {Bellini}}, \ and\ \bibinfo {author} {\bibfnamefont
      {N.~J.}\ \bibnamefont {Cerf}},\ }\href {\doibase 10.1103/PhysRevA.89.062311}
      {\bibfield  {journal} {\bibinfo  {journal} {Phys. Rev. A}\ }\textbf {\bibinfo
      {volume} {89}},\ \bibinfo {pages} {062311} (\bibinfo {year}
      {2014})}\BibitemShut {NoStop}%
    \bibitem [{\citenamefont {Tan}\ \emph {et~al.}(1991)\citenamefont {Tan},
      \citenamefont {Walls},\ and\ \citenamefont {Collett}}]{tan1991nonlocality}%
      \BibitemOpen
      \bibfield  {author} {\bibinfo {author} {\bibfnamefont {S.~M.}\ \bibnamefont
      {Tan}}, \bibinfo {author} {\bibfnamefont {D.~F.}\ \bibnamefont {Walls}}, \
      and\ \bibinfo {author} {\bibfnamefont {M.~J.}\ \bibnamefont {Collett}},\
      }\href {\doibase 10.1103/PhysRevLett.66.252} {\bibfield  {journal} {\bibinfo
      {journal} {Phys. Rev. Lett.}\ }\textbf {\bibinfo {volume} {66}},\ \bibinfo
      {pages} {252} (\bibinfo {year} {1991})}\BibitemShut {NoStop}%
    \bibitem [{\citenamefont {Dunningham}\ and\ \citenamefont
      {Vedral}(2007)}]{dunningham2007nonlocality}%
      \BibitemOpen
      \bibfield  {author} {\bibinfo {author} {\bibfnamefont {J.}~\bibnamefont
      {Dunningham}}\ and\ \bibinfo {author} {\bibfnamefont {V.}~\bibnamefont
      {Vedral}},\ }\href {\doibase 10.1103/PhysRevLett.99.180404} {\bibfield
      {journal} {\bibinfo  {journal} {Phys. Rev. Lett.}\ }\textbf {\bibinfo
      {volume} {99}},\ \bibinfo {pages} {180404} (\bibinfo {year}
      {2007})}\BibitemShut {NoStop}%
    \bibitem [{\citenamefont {Das}\ \emph {et~al.}(2021)\citenamefont {Das},
      \citenamefont {Karczewski}, \citenamefont {Mandarino}, \citenamefont
      {Markiewicz}, \citenamefont {Woloncewicz},\ and\ \citenamefont
      {Żukowski}}]{das2021can}%
      \BibitemOpen
      \bibfield  {author} {\bibinfo {author} {\bibfnamefont {T.}~\bibnamefont
      {Das}}, \bibinfo {author} {\bibfnamefont {M.}~\bibnamefont {Karczewski}},
      \bibinfo {author} {\bibfnamefont {A.}~\bibnamefont {Mandarino}}, \bibinfo
      {author} {\bibfnamefont {M.}~\bibnamefont {Markiewicz}}, \bibinfo {author}
      {\bibfnamefont {B.}~\bibnamefont {Woloncewicz}}, \ and\ \bibinfo {author}
      {\bibfnamefont {M.}~\bibnamefont {Żukowski}},\ }\href {\doibase
      10.1088/1367-2630/ac0ffe} {\bibfield  {journal} {\bibinfo  {journal} {New J.
      Phys.}\ }\textbf {\bibinfo {volume} {23}},\ \bibinfo {pages} {073042}
      (\bibinfo {year} {2021})}\BibitemShut {NoStop}%
    \bibitem [{\citenamefont {Nunn}\ \emph {et~al.}(2021)\citenamefont {Nunn},
      \citenamefont {Franson},\ and\ \citenamefont {Pittman}}]{nunn2021heralding}%
      \BibitemOpen
      \bibfield  {author} {\bibinfo {author} {\bibfnamefont {C.~M.}\ \bibnamefont
      {Nunn}}, \bibinfo {author} {\bibfnamefont {J.~D.}\ \bibnamefont {Franson}}, \
      and\ \bibinfo {author} {\bibfnamefont {T.~B.}\ \bibnamefont {Pittman}},\
      }\href {\doibase 10.1103/PhysRevA.104.033717} {\bibfield  {journal} {\bibinfo
       {journal} {Phys. Rev. A}\ }\textbf {\bibinfo {volume} {104}},\ \bibinfo
      {pages} {033717} (\bibinfo {year} {2021})}\BibitemShut {NoStop}%
    \bibitem [{\citenamefont {Winnel}\ \emph {et~al.}(2020)\citenamefont {Winnel},
      \citenamefont {Hosseinidehaj},\ and\ \citenamefont
      {Ralph}}]{winnel2020generalized}%
      \BibitemOpen
      \bibfield  {author} {\bibinfo {author} {\bibfnamefont {M.~S.}\ \bibnamefont
      {Winnel}}, \bibinfo {author} {\bibfnamefont {N.}~\bibnamefont
      {Hosseinidehaj}}, \ and\ \bibinfo {author} {\bibfnamefont {T.~C.}\
      \bibnamefont {Ralph}},\ }\href {\doibase 10.1103/PhysRevA.102.063715}
      {\bibfield  {journal} {\bibinfo  {journal} {Phys. Rev. A}\ }\textbf {\bibinfo
      {volume} {102}},\ \bibinfo {pages} {063715} (\bibinfo {year}
      {2020})}\BibitemShut {NoStop}%
    \bibitem [{\citenamefont {Kwiat}\ \emph {et~al.}(2001)\citenamefont {Kwiat},
      \citenamefont {Barraza-Lopez}, \citenamefont {Stefanov},\ and\ \citenamefont
      {Gisin}}]{kwiat2001experimental}%
      \BibitemOpen
      \bibfield  {author} {\bibinfo {author} {\bibfnamefont {P.~G.}\ \bibnamefont
      {Kwiat}}, \bibinfo {author} {\bibfnamefont {S.}~\bibnamefont
      {Barraza-Lopez}}, \bibinfo {author} {\bibfnamefont {A.}~\bibnamefont
      {Stefanov}}, \ and\ \bibinfo {author} {\bibfnamefont {N.}~\bibnamefont
      {Gisin}},\ }\href {\doibase 10.1038/35059017} {\bibfield  {journal} {\bibinfo
       {journal} {Nature}\ }\textbf {\bibinfo {volume} {409}},\ \bibinfo {pages}
      {1014} (\bibinfo {year} {2001})}\BibitemShut {NoStop}%
    \bibitem [{\citenamefont {Bennett}\ \emph {et~al.}(1996)\citenamefont
      {Bennett}, \citenamefont {Bernstein}, \citenamefont {Popescu},\ and\
      \citenamefont {Schumacher}}]{bennett1996concentrating}%
      \BibitemOpen
      \bibfield  {author} {\bibinfo {author} {\bibfnamefont {C.~H.}\ \bibnamefont
      {Bennett}}, \bibinfo {author} {\bibfnamefont {H.~J.}\ \bibnamefont
      {Bernstein}}, \bibinfo {author} {\bibfnamefont {S.}~\bibnamefont {Popescu}},
      \ and\ \bibinfo {author} {\bibfnamefont {B.}~\bibnamefont {Schumacher}},\
      }\href {\doibase 10.1103/PhysRevA.53.2046} {\bibfield  {journal} {\bibinfo
      {journal} {Phys. Rev. A}\ }\textbf {\bibinfo {volume} {53}},\ \bibinfo
      {pages} {2046} (\bibinfo {year} {1996})}\BibitemShut {NoStop}%
    \bibitem [{\citenamefont {Kirby}\ \emph {et~al.}(2019)\citenamefont {Kirby},
      \citenamefont {Jones},\ and\ \citenamefont {Brodsky}}]{kirby2019effect}%
      \BibitemOpen
      \bibfield  {author} {\bibinfo {author} {\bibfnamefont {B.~T.}\ \bibnamefont
      {Kirby}}, \bibinfo {author} {\bibfnamefont {D.~E.}\ \bibnamefont {Jones}}, \
      and\ \bibinfo {author} {\bibfnamefont {M.}~\bibnamefont {Brodsky}},\ }\href
      {\doibase 10.1109/JLT.2018.2879754} {\bibfield  {journal} {\bibinfo
      {journal} {Journal of Lightwave Technology}\ }\textbf {\bibinfo {volume}
      {37}},\ \bibinfo {pages} {95} (\bibinfo {year} {2019})}\BibitemShut {NoStop}%
    \bibitem [{\citenamefont {Jones}\ \emph {et~al.}(2018)\citenamefont {Jones},
      \citenamefont {Kirby},\ and\ \citenamefont {Brodsky}}]{jones2018tuning}%
      \BibitemOpen
      \bibfield  {author} {\bibinfo {author} {\bibfnamefont {D.~E.}\ \bibnamefont
      {Jones}}, \bibinfo {author} {\bibfnamefont {B.~T.}\ \bibnamefont {Kirby}}, \
      and\ \bibinfo {author} {\bibfnamefont {M.}~\bibnamefont {Brodsky}},\ }\href
      {\doibase 10.1038/s41534-018-0107-x} {\bibfield  {journal} {\bibinfo
      {journal} {npj Quantum Inf.}\ }\textbf {\bibinfo {volume} {4}},\ \bibinfo
      {pages} {58} (\bibinfo {year} {2018})}\BibitemShut {NoStop}%
    \bibitem [{\citenamefont {D\"ur}\ \emph {et~al.}(2000)\citenamefont {D\"ur},
      \citenamefont {Vidal},\ and\ \citenamefont {Cirac}}]{dur2000three}%
      \BibitemOpen
      \bibfield  {author} {\bibinfo {author} {\bibfnamefont {W.}~\bibnamefont
      {D\"ur}}, \bibinfo {author} {\bibfnamefont {G.}~\bibnamefont {Vidal}}, \ and\
      \bibinfo {author} {\bibfnamefont {J.~I.}\ \bibnamefont {Cirac}},\ }\href
      {\doibase 10.1103/PhysRevA.62.062314} {\bibfield  {journal} {\bibinfo
      {journal} {Phys. Rev. A}\ }\textbf {\bibinfo {volume} {62}},\ \bibinfo
      {pages} {062314} (\bibinfo {year} {2000})}\BibitemShut {NoStop}%
    \bibitem [{\citenamefont {Heaney}\ \emph {et~al.}(2011)\citenamefont {Heaney},
      \citenamefont {Cabello}, \citenamefont {Santos},\ and\ \citenamefont
      {Vedral}}]{heaney2011extreme}%
      \BibitemOpen
      \bibfield  {author} {\bibinfo {author} {\bibfnamefont {L.}~\bibnamefont
      {Heaney}}, \bibinfo {author} {\bibfnamefont {A.}~\bibnamefont {Cabello}},
      \bibinfo {author} {\bibfnamefont {M.~F.}\ \bibnamefont {Santos}}, \ and\
      \bibinfo {author} {\bibfnamefont {V.}~\bibnamefont {Vedral}},\ }\href
      {\doibase 10.1088/1367-2630/13/5/053054} {\bibfield  {journal} {\bibinfo
      {journal} {New J. Phys.}\ }\textbf {\bibinfo {volume} {13}},\ \bibinfo
      {pages} {053054} (\bibinfo {year} {2011})}\BibitemShut {NoStop}%
    \bibitem [{\citenamefont {Guha}\ and\ \citenamefont
      {Shapiro}(2013)}]{guha2013reading}%
      \BibitemOpen
      \bibfield  {author} {\bibinfo {author} {\bibfnamefont {S.}~\bibnamefont
      {Guha}}\ and\ \bibinfo {author} {\bibfnamefont {J.~H.}\ \bibnamefont
      {Shapiro}},\ }\href {\doibase 10.1103/PhysRevA.87.062306} {\bibfield
      {journal} {\bibinfo  {journal} {Phys. Rev. A}\ }\textbf {\bibinfo {volume}
      {87}},\ \bibinfo {pages} {062306} (\bibinfo {year} {2013})}\BibitemShut
      {NoStop}%
    \bibitem [{Note1()}]{Note1}%
      \BibitemOpen
      \bibinfo {note} {All probability of success calculations beyond Sec.~\ref
      {sec:experiment} assume NLA is implemented with the filter $G_N$, with $N$
      set to the maximum number of photons in each channel. For the W and GHZ
      states $(N=1)$, we could also have used the physical amplifiers from in
      Sec.~\ref {sec:experiment}, which are identical to the $G_N$ filter apart
      from an additional prefactor. However, the simple $G_N$ filter is much easier
      to generalize to states with $N>1$ photons, allowing for better comparison
      with NOON and TMSV states. Crucially, the success probability of the $G_N$
      filter exhibits a very similar exponential dependence on amplifier gain that
      we would also expect from the physical amplifiers in Ref.~\cite
      {ralph2009nondeterministic}.}\BibitemShut {Stop}%
    \bibitem [{\citenamefont {Komar}\ \emph {et~al.}(2014)\citenamefont {Komar},
      \citenamefont {Kessler}, \citenamefont {Bishof}, \citenamefont {Jiang},
      \citenamefont {Sorensen}, \citenamefont {Ye},\ and\ \citenamefont
      {Lukin}}]{komar2014quantum}%
      \BibitemOpen
      \bibfield  {author} {\bibinfo {author} {\bibfnamefont {P.}~\bibnamefont
      {Komar}}, \bibinfo {author} {\bibfnamefont {E.~M.}\ \bibnamefont {Kessler}},
      \bibinfo {author} {\bibfnamefont {M.}~\bibnamefont {Bishof}}, \bibinfo
      {author} {\bibfnamefont {L.}~\bibnamefont {Jiang}}, \bibinfo {author}
      {\bibfnamefont {A.~S.}\ \bibnamefont {Sorensen}}, \bibinfo {author}
      {\bibfnamefont {J.}~\bibnamefont {Ye}}, \ and\ \bibinfo {author}
      {\bibfnamefont {M.~D.}\ \bibnamefont {Lukin}},\ }\href@noop {} {\bibfield
      {journal} {\bibinfo  {journal} {Nature Physics}\ }\textbf {\bibinfo {volume}
      {10}},\ \bibinfo {pages} {582} (\bibinfo {year} {2014})}\BibitemShut
      {NoStop}%
    \bibitem [{\citenamefont {Zhong}\ \emph {et~al.}(2021)\citenamefont {Zhong},
      \citenamefont {Chang}, \citenamefont {Bienfait}, \citenamefont {Dumur},
      \citenamefont {Chou}, \citenamefont {Conner}, \citenamefont {Grebel},
      \citenamefont {Povey}, \citenamefont {Yan}, \citenamefont {Schuster} \emph
      {et~al.}}]{zhong2021deterministic}%
      \BibitemOpen
      \bibfield  {author} {\bibinfo {author} {\bibfnamefont {Y.}~\bibnamefont
      {Zhong}}, \bibinfo {author} {\bibfnamefont {H.-S.}\ \bibnamefont {Chang}},
      \bibinfo {author} {\bibfnamefont {A.}~\bibnamefont {Bienfait}}, \bibinfo
      {author} {\bibfnamefont {{\'E}.}~\bibnamefont {Dumur}}, \bibinfo {author}
      {\bibfnamefont {M.-H.}\ \bibnamefont {Chou}}, \bibinfo {author}
      {\bibfnamefont {C.~R.}\ \bibnamefont {Conner}}, \bibinfo {author}
      {\bibfnamefont {J.}~\bibnamefont {Grebel}}, \bibinfo {author} {\bibfnamefont
      {R.~G.}\ \bibnamefont {Povey}}, \bibinfo {author} {\bibfnamefont
      {H.}~\bibnamefont {Yan}}, \bibinfo {author} {\bibfnamefont {D.~I.}\
      \bibnamefont {Schuster}},  \emph {et~al.},\ }\href@noop {} {\bibfield
      {journal} {\bibinfo  {journal} {Nature}\ }\textbf {\bibinfo {volume} {590}},\
      \bibinfo {pages} {571} (\bibinfo {year} {2021})}\BibitemShut {NoStop}%
    \bibitem [{\citenamefont {Dowling}(2008)}]{dowling2008quantum}%
      \BibitemOpen
      \bibfield  {author} {\bibinfo {author} {\bibfnamefont {J.~P.}\ \bibnamefont
      {Dowling}},\ }\href {\doibase 10.1080/00107510802091298} {\bibfield
      {journal} {\bibinfo  {journal} {Contemp. Phys.}\ }\textbf {\bibinfo {volume}
      {49}},\ \bibinfo {pages} {125} (\bibinfo {year} {2008})}\BibitemShut
      {NoStop}%
    \bibitem [{\citenamefont {Lang}\ and\ \citenamefont
      {Caves}(2014)}]{lang2014optimal}%
      \BibitemOpen
      \bibfield  {author} {\bibinfo {author} {\bibfnamefont {M.~D.}\ \bibnamefont
      {Lang}}\ and\ \bibinfo {author} {\bibfnamefont {C.~M.}\ \bibnamefont
      {Caves}},\ }\href {\doibase 10.1103/PhysRevA.90.025802} {\bibfield  {journal}
      {\bibinfo  {journal} {Phys. Rev. A}\ }\textbf {\bibinfo {volume} {90}},\
      \bibinfo {pages} {025802} (\bibinfo {year} {2014})}\BibitemShut {NoStop}%
    \bibitem [{\citenamefont {Lee}\ \emph {et~al.}(2024)\citenamefont {Lee},
      \citenamefont {Shin}, \citenamefont {Park}, \citenamefont {Kim},\ and\
      \citenamefont {Shin}}]{lee2024noon}%
      \BibitemOpen
      \bibfield  {author} {\bibinfo {author} {\bibfnamefont {D.}~\bibnamefont
      {Lee}}, \bibinfo {author} {\bibfnamefont {W.}~\bibnamefont {Shin}}, \bibinfo
      {author} {\bibfnamefont {S.}~\bibnamefont {Park}}, \bibinfo {author}
      {\bibfnamefont {J.}~\bibnamefont {Kim}}, \ and\ \bibinfo {author}
      {\bibfnamefont {H.}~\bibnamefont {Shin}},\ }\href@noop {} {\bibfield
      {journal} {\bibinfo  {journal} {Light: Science \& Applications}\ }\textbf
      {\bibinfo {volume} {13}},\ \bibinfo {pages} {90} (\bibinfo {year}
      {2024})}\BibitemShut {NoStop}%
    \bibitem [{Note2()}]{Note2}%
      \BibitemOpen
      \bibinfo {note} {A well known theorem in linear algebra states that if two
      operators commute and each of them has distinct eigenvalues the operators
      then share the same eigenvectors. Since this is not the case we have
      (eigenvalues can repeat here) we cannot guarantee from this theorem alone
      that, for example, an eigenvector of $\protect \hat {X}$ is an eigenvector of
      $\protect \hat {Y}$}\BibitemShut {NoStop}%
    \bibitem [{Note3()}]{Note3}%
      \BibitemOpen
      \bibinfo {note} {Certain input states will admit solutions with non-positive
      $x_k$ in some modes ($g_k\leq 1/\tau $). In these cases, not all of the
      amplifier gains can be increased simultaneously, and thus it will not be
      possible to suppress all relevant noise terms. For example, states of the
      form $\mathinner {|{\psi }\rangle }=\alpha \mathinner {|{m}\rangle
      }_1\mathinner {|{m+1}\rangle }_2+\beta \mathinner {|{n}\rangle }_1\mathinner
      {|{n+1}\rangle }_2$ will satisfy Eq.~\ref {eqn:eigenproblem1} with $g_1 =
      1/(\lambda \tau _1)$ and $g_2 = \lambda /\tau _2$. As the eigenvalue $\lambda
      $ is increased, $g_2$ also increases, but $g_1$ must decrease, even tending
      toward attenuation $g_1<1$ to keep each mode balanced. Noise terms
      corresponding to photon loss in mode 1 would therefore not be suppressed
      relative to the signal.}\BibitemShut {Stop}%
    \bibitem [{Note4()}]{Note4}%
      \BibitemOpen
      \bibinfo {note} {We could have just as well broken $\protect \hat {X}$ into
      separate operators for gain and loss and had analogous results for three
      operators without any additional physical insight.}\BibitemShut {Stop}%
    \end{thebibliography}
\end{document}